\date{}\setcounter{page}{1}
\newtheorem{notat}{Notation}
\newtheorem{prop}{Proposition}
\newtheorem{defin}{Definition}
\title{Rooted Tree Arithmetic and Equations\thanks{Many thanks are due to Federico Poloni and Mahdi Amani for their comments and suggestions.}
}
\author {Fabrizio Luccio
\\Dipartimento di Informatica, University of Pisa.
\\luccio@di.unipi.it 
}
\begin{document}
\maketitle


\begin{abstract}

We propose a new arithmetic for non-empty rooted unordered trees simply called trees. 
After discussing tree representation and enumeration, 
we define the operations of tree addition, multiplication and stretch, prove their properties, and show that all trees can be generated from a starting tree of one vertex. 
We then show how a given tree can be obtained as the sum or product of two trees, thus defining {\em prime trees} with respect to addition and multiplication. In both cases we
show how primality can be decided in time polynomial in the number of vertices and we prove that factorization is unique.  
We then define negative trees and suggest dealing with tree equations, giving some preliminary results.
Finally we comment on how our arithmetic might be useful, and discuss preceding studies that have some relations with our. 
To the best of our knowledge our approach and results are completely new aside for a similar proposal deposited as an arXiv manuscript~\cite{L15}.
\end{abstract}

\section{Basic properties and notation } \label{basic}

\begin{itemize}

\item We refer to  {\bf rooted unordered trees} simply called trees. 
Our trees are non empty. {\bf 1} denotes the tree containing exactly one vertex, and is the basic element of our theory.

\item In a tree $T$, {\bf \em r}$\,(T)$ denotes the root of $T$; $x \in T$ denotes any of its vertices;
$n_T$ and $e_T$ respectively denote the numbers of vertices and leaves. A subtree is the tree composed of a vertex $x$ and all its descendants in $T$. The subtrees routed at the children of  $x$ are called subtrees of $x$. 
$s_T$ denotes the number of subtrees of {\bf \em r}$\,(T)$.

\item A tree $T$ can be represented as a binary sequences $S_T$ (the original reference for ordered trees is \cite{Z80}). In our scheme $T$ is traversed in left to right preorder inserting 1 in the sequence for each vertex encountered, and inserting 0 for each move backwards. Then $S_T$ is composed of $2n$ bits as shown in Figure 1, and
has the recursive structure 1 $S_1$ . . . $S_k$ 0, where the $S_i$ are the sequences representing the subtrees of {\bf \em r}$\,(T)$. The sequences for tree {\bf 1} is 10.
Note that all the prefixes of $S_T$ have more 1's than 0's except for the whole sequence that has as many 1's as 0's.

\end{itemize}

\begin{figure}
\label{fig1}
\begin{center}
\includegraphics[scale=0.5]{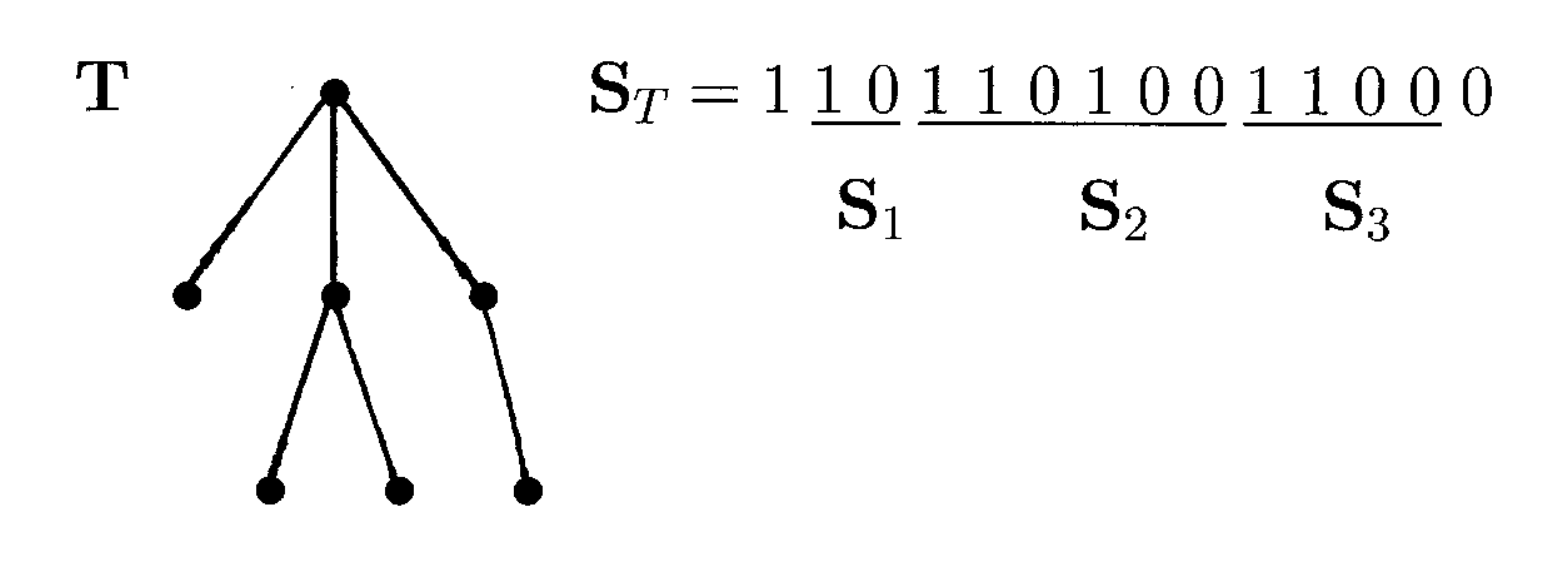}
\end{center}

\caption {Tree  representation as a binary sequence. $S_1, S_2, S_3$ represent the subtrees of the root of $T$.}
\end{figure}

Since $T$ is unordered, the order in which the subsequences $S_i$ appear in $S_T$ is immaterial (i.e., in general many different sequences represent $T$).
However a {\em canonical form} for trees is established so that their sequences will be uniquely determined, and will result  to be ordered
for increasing values if interpreted as binary numbers. To this end 
the trees are grouped into consecutive families ${\cal F}_1, {\cal F}_2,\dots$ as shown in Figure 2, where ${\cal F}_i$ contains the trees of $i$ vertices.
So the trees are ordered for increasing number of vertices, and inside each family  the ordering is determined by the canonical form as follows.
 Trees and sequences are then numbered with increasing natural numbers.

\begin{figure}
\label{fig4}
\begin{center}
\includegraphics[scale=0.40]{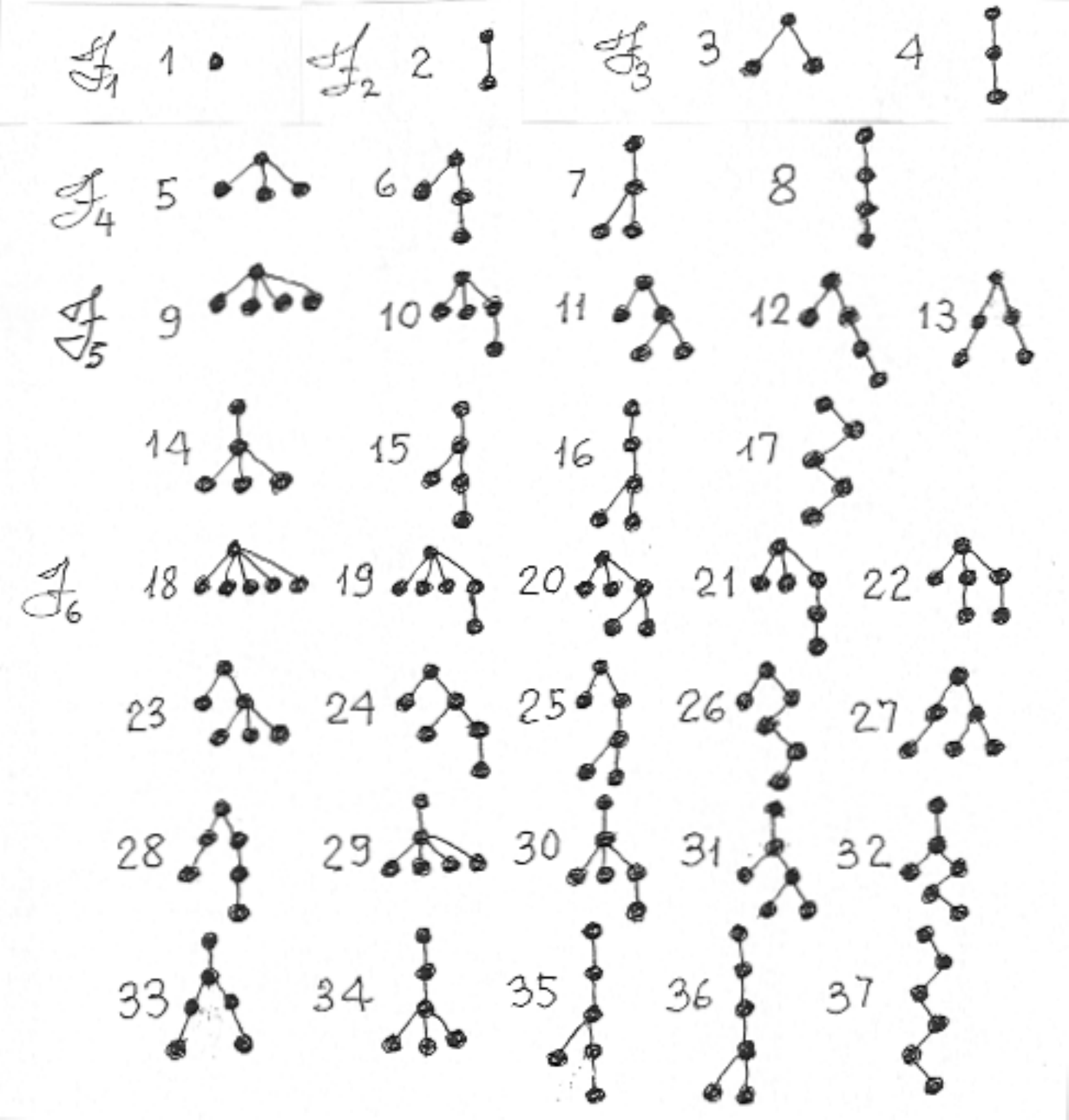}
\end{center} 

\caption {The canonical families of trees ${\cal F}_1$ to ${\cal F}_6$ and the corresponding tree enumeration.}
\end{figure}

\begin{itemize}

\item{If the sequences are interpreted as binary numbers, for two trees $U,T$ with $n_U<n_T$ we have $S_U<S_T$ because the initial character of each sequence is 1 and $S_U$ is shorter than $S_T$. This is consistent with the property that the trees of ${\cal F}_{n_U}$ precede the trees of ${\cal F}_{n_T}$ in the ordering.}

\item The families ${\cal F}_1,{\cal F}_2$ contain one tree each numbered 1, 2. 

\item The ordering of the trees in ${\cal F}_{n>2}$ is based on the ordering of the preceding families. Consider the multisets of positive integers whose sum is  $n-1$. E.g., for $n=6$ these multisets are: 1,1,1,1,1 - 1,1,1,2 - 1,1,3 - 1,2,2 - 1,4 - 2,3 - 5 ordered for non-decreasing value of the digits left to right. Each multiset corresponds to a group of consecutive trees in ${\cal F}_n$, where the digits in the multiset indicate the number of vertices of the subtrees of the root.
For ${\cal F}_6$ in Figure 2, multiset 1,1,1,1,1 refers to tree 18;  multiset 1,1,1,2 refers to tree 19; 
multiset 1,1,3 refers to trees 20 and 21 that have the two trees of ${\cal F}_3$ as third subtree, following the ordering in ${\cal F}_3$; $\dots$; multiset 2,3 refers to trees 27 and 28; the last multiset 5 refers to trees 29 to 37 whose roots have only one child.

\end{itemize}

So the first tree in ${\cal F}_n$ is the one of height 2 with $n-1$ subtrees of the root of one vertex each and sequence 1 1 0 1 0 1 0 . . . 1 0 0; and the last tree is the ``chain'' of $n$ vertices and sequence 1 1 . . . 1 0 0 . . . 0. As said the binary  sequences representing the trees in ${\cal F}_n$ are ordered for increasing values, see the listing for the first six canonical families in the Appendix. 

Many of these trees (not necessarily all) of each family ${\cal F}_n$ can be generated from the ones in ${\cal F}_{n-1}$ using the following:

\vspace{3mm}
\noindent{\bf Doubling Rule DR}. From each tree $T$ in ${\cal F}_{n-1}$ build two trees $T_1,T_2$ in ${\cal F}_n$ by adding a new vertex as the leftmost child of {\bf \em r}$\,(T)$, or adding a new root and appending $T$ to it as a unique subtree.

\vspace{3mm}
For example the four trees of ${\cal F}_4$ in Figure 2 can be built by {\bf DR} from the two trees of ${\cal F}_3$. The nine trees of ${\cal F}_5$ can be built by {\bf DR} from the four trees of ${\cal F}_4$, with the exception of tree 13. The twenty trees of ${\cal F}_6$ can be built by {\bf DR} from the nine trees of ${\cal F}_5$, with the exception of trees 27 and 28. In fact the number of extra trees that cannot be built with {\bf DR} increases sharply with $n$. 
Letting $f_n$ denote the number of trees in ${\cal F}_n$ 
we immediately have
$f_n\geq 2^{n-2}$ for $n\geq 2$. But a deep analysis~\cite{F03,PR94} has shown that the asymptotic value of this function is much higher, and can be approximated as: 

\vspace{3mm}
\hspace{5cm}$f_n  \sim 0.44 \cdot 2.96^n \cdot n^{-3/2}.$\hfill(1)

\vspace{3mm}
Then the minimum length of the sequences representing the trees of ${\cal F}_n$ is given approximately by:

$$\log_2(0.44 \cdot 2.96^n \cdot n^{-3/2}) \sim 1.57\,n-1.5\log_2n-1.19$$ 

\noindent much less than the $2n$ bits of our proposal. We only note that for $n\geq 2$ all the binary sequences representing our trees begin with two 1's and end with two 0's (see 
the listing in the Appendix), then these four digits could be removed, leaving a sequence of $2n-4$ bits to represent a tree. We shall see that our representation is amenable at working easily on the trees, so we maintain it, leaving the construction of a shorter efficient coding as a challenging open problem.

An arbitrary tree $T$ can be transformed into its canonical form with Algorithm CF of Figure 3. An elementary analysis shows that the algorithm is correct and each of its steps {\bf 1},$\,${\bf 2} can be executed in total O($n^2$) time. The algorithm can be possibly improved, however, our present aim is just showing that the problem can be solved in polynomial time.


\begin{figure} \label{CF}

\begin{center}
\fbox{
\begin{minipage}{14cm}
{\bf algorithm} CF($T,n$)

\vspace{1mm}


\vspace{1mm}
{\bf 1.} {\bf forany} vertex $x\in T$ 

\hspace{10mm}{\bf count} the number of vertices $n_1,\dots n_k$ of its subtrees;

\hspace{10mm}{\bf reorder} these subtrees for non decreasing values of the $n_i$;

\hspace{10mm}{\bf let} $G_1,\dots,G_r$ be the groups of subtrees with the same number  $g_1,\dots, g_r$

\hspace{17mm}of vertices, with all $g_i> 2$;

\vspace{1mm}
\hspace{5mm}// reordering is necessary but not sufficient for having $T$ in canonical form

\hspace{5mm}// the trees in all $G_i$ must be be arranged in canonical order

\vspace{1mm}
{\bf 2.} {\bf forany} $x\in T$, down-top from the vertices closest to the leaves

\hspace{10mm} {\bf forany} group $G_i=\{T_1,\dots,T_s\}$ 

\hspace{17mm} {\bf compute} the representing sequences $S_1,\dots,S_s$; 

\hspace{17mm} {\bf order} $S_1,\dots,S_s$ for increasing binary value; 

\hspace{17mm} {\bf permute} $T_1,\dots,T_s$ accordingly.

\end{minipage}
}

\end{center}

\caption {\small Structure of Algorithm CF for transforming an arbitrary tree $T$ of $n$ vertices in canonical form. CF requires polynomial time in $n$.}
\end{figure}


\section{Operators and tree generation} \label{operators}

Our basic operations are addition (symbol +) and multiplication (symbol \large$\cdot$\normalsize$\,$, or simple concatenation) defined as follows. Referring to Figure 4, let $A,B$ be two arbitrary trees:

\begin{itemize}

\item{\bf Addition}. $T=A+B$ is built by merging the two roots {\bf \em r}$\,(A)$, {\bf \em r}$\,(B)$ into a new root {\bf \em r}$\,(T)$. That is the subtrees of $A$ and $B$ (if any) become the subtrees of
{\bf \em r}$\,(T)$.
We have $A\,+$ {\bf 1} = {\bf 1} $+\,A=A$. 

\item{{\bf Multiplication}. $T=A\cdot B$ is built by merging {\bf \em r}$\,(B)$ with each vertex $x\in A$ so that all the subtrees of {\bf \em r}$\,(B)$ become new subtrees of $x$.
We have $A\,\cdot$ {\bf 1} = {\bf 1} $\cdot \,A = A$.} 

\end{itemize}

In both operations it is immaterial in which order the subtrees are attached to the new parents. 
We also define  the operation stretch (symbol over-bar)  whose interest will be made clear in the following:

\begin{itemize}

\item{\bf Stretch}. $T=\bar{A}$ consists of a new root {\bf \em r}$\,(T)$ with $A$ attached as a subtree. 


\end{itemize}

In the notation stretch has precedence over multiplication, and multiplication has precedence over addition. Two propositions immediately follow: 

\begin{figure}
\label{fig2}
\begin{center}
\includegraphics[scale=0.6]{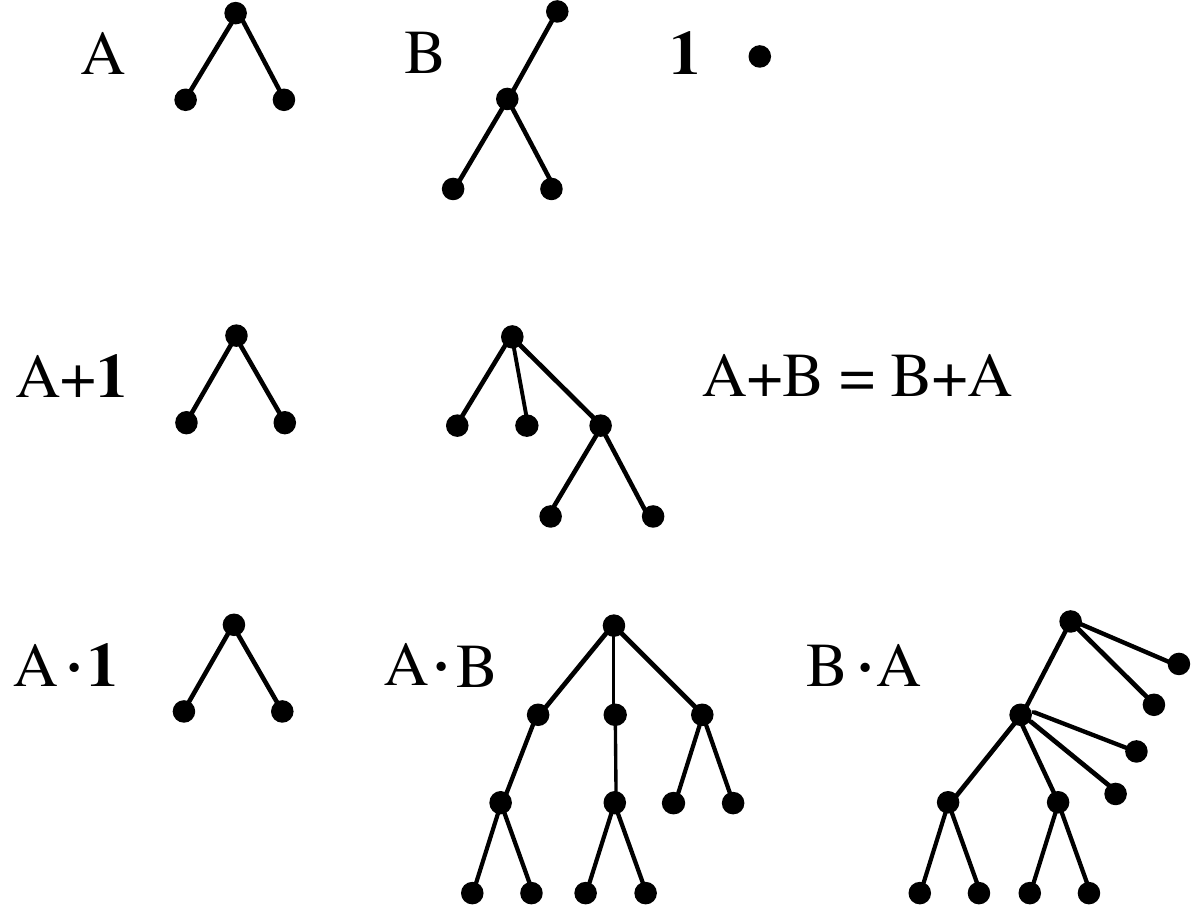}
\end{center}

\caption {Examples of addition and multiplication.}
\end{figure}

\begin{prop}\label{prop-numbers}
For $T=A+B$ we have $n_T= n_A+n_B-1$.  
For $T=A\cdot B$ we have $n_T=n_An_B$.  
For $T=\bar A$ we have $n_T=n_A+1$.  
\end {prop}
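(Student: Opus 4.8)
The plan is to verify each of the three identities by a direct count of vertices straight from the definitions, since no structural subtlety beyond the disjointness of the pieces is involved.

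First I would treat addition. In $T = A + B$ the two roots are identified into a single new root, so exactly one vertex of the result comes from the two roots. Every non-root vertex of $A$ (there are $n_A - 1$ of them) and every non-root vertex of $B$ (there are $n_B - 1$) is preserved and kept distinct, and no two of these are merged with one another. Hence $n_T = 1 + (n_A - 1) + (n_B - 1) = n_A + n_B - 1$. The stretch case is even more immediate: $\bar A$ keeps all $n_A$ vertices of $A$ and adds one brand-new root on top, so $n_T = n_A + 1$.

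The multiplication case is the only one that deserves a short argument. The cleanest way to see it is to read the definition as a substitution: for every vertex $x \in A$ we attach a fresh copy of $B$ by merging the root of that copy with $x$, so that the subtrees of the copy's root become new subtrees of $x$. Each such copy thus contributes the vertex $x$ itself, acting as the image of the root of $B$, together with the $n_B - 1$ remaining vertices of the copy. The key observation is that these contributions are pairwise disjoint: distinct vertices $x, x' \in A$ serve as distinct roots, and the non-root vertices of their respective copies are freshly created and hence never shared. Consequently the vertex set of $T$ partitions into $n_A$ blocks, one per vertex of $A$, each of size $n_B$, giving $n_T = \sum_{x \in A} n_B = n_A n_B$.

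The main (and essentially only) obstacle is to make the disjointness claim in the multiplication case precise, that is, to argue that merging the root of each copy of $B$ with its vertex $x$ never accidentally identifies vertices belonging to different copies, so that the blocks really are disjoint and the total is exactly $n_A n_B$ rather than something smaller. Once this bookkeeping is granted, all three identities follow immediately.
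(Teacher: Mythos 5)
Your proof is correct; the paper itself offers no written proof of this proposition, presenting it as an immediate consequence of the definitions, and your direct vertex count (one merged root for addition, one fresh root for stretch, and $n_A$ disjoint blocks of size $n_B$ for multiplication) is exactly the routine verification the paper takes for granted. Nothing is missing: the disjointness bookkeeping you flag in the multiplication case is precisely why the count is $n_A n_B$, and your argument handles it.
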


\begin{prop}\label{propad}
Addition is commutative and associative. That is $A+B=B+A$  and $(A+B)+C=A+(B+C)$.  
\end{prop}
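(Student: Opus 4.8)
The plan is to reformulate tree addition as an operation on the multiset of subtrees hanging from the root, and then deduce both identities from the corresponding properties of multiset sum. For a tree $T$, let $\mathcal{S}(T)$ denote the multiset of subtrees of its root, i.e.\ the $s_T$ trees rooted at the children of {\bf \em r}$\,(T)$, counted with multiplicity. Since our trees are unordered, two trees are equal exactly when their roots carry the same multiset of subtrees; this is the standard recursive identity criterion for rooted unordered trees. Reading the definition of addition through this lens, $A+B$ is precisely the tree with $\mathcal{S}(A+B)=\mathcal{S}(A)\uplus\mathcal{S}(B)$, where $\uplus$ denotes multiset sum. The clause ``it is immaterial in which order the subtrees are attached'' is exactly the statement that this characterization is well defined.

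Granting this, commutativity is immediate: $\mathcal{S}(A+B)=\mathcal{S}(A)\uplus\mathcal{S}(B)=\mathcal{S}(B)\uplus\mathcal{S}(A)=\mathcal{S}(B+A)$ because multiset sum is commutative, and equal subtree multisets force $A+B=B+A$. Associativity follows in the same way: I would compute $\mathcal{S}\bigl((A+B)+C\bigr)=\bigl(\mathcal{S}(A)\uplus\mathcal{S}(B)\bigr)\uplus\mathcal{S}(C)$ and $\mathcal{S}\bigl(A+(B+C)\bigr)=\mathcal{S}(A)\uplus\bigl(\mathcal{S}(B)\uplus\mathcal{S}(C)\bigr)$, then invoke associativity of multiset sum to conclude that the two multisets, and hence the two trees, coincide.

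The only step that needs genuine care — the main obstacle — is the bridge lemma identifying equality of unordered trees with equality of their root subtree multisets, which simultaneously shows that addition descends to a well-defined operation on trees regardless of the attachment order. I would prove this by induction on the number of vertices, the base case being the single-vertex tree {\bf 1} (empty subtree multiset), and the inductive step matching up the subtrees of equal size via the inductive hypothesis. Once this lemma is in place, Proposition~\ref{propad} is a direct transcription of the commutativity and associativity of multiset sum, with no further combinatorial work required.
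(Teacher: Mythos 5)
Your proposal is correct and is essentially the paper's own reasoning made rigorous: the paper gives no written proof at all, asserting that the proposition ``immediately follows'' from the definition of addition, since merging roots simply pools the (unordered) subtrees of the two roots. Your formalization via $\mathcal{S}(A+B)=\mathcal{S}(A)\uplus\mathcal{S}(B)$, together with the bridge lemma identifying tree equality with equality of root-subtree multisets, is exactly the precise rendering of that immediate observation, so the two approaches coincide.
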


For a positive integer $k>1$ and a tree $A$ we can define the product $T=kA$ (not to be confused with the product of trees) as the sum of $k$ copies of $A$. Due to Propositions \ref{propad} 
and \ref{prop-numbers}, 
the $k$ copies of $A$ can be combined in any order and we have $n_T=k\,n_A-k+1$. However, for any given $k$, the different trees of $n_T$  vertices obtained as a product $T=kA$ are only $f_{n_A}$, that is they constitute an exponentially small fraction of all the trees in ${\cal F}_{n_T}$. For example the ``even'' trees (obtained for $k$ even) are a small minority among all the trees with the same number of vertices.
Similarly we can define the stretch-product $U=k\bar A$ as $A$ stretched $k$ times, and we have $n_U=\,n_A+k$. Again for any given $k$, the trees of $n_U$  vertices obtained as a stretch-product $k\bar A$ are only $f_{n_A}$ and constitute an exponentially small fraction of all the trees in ${\cal F}_{n_U}$. 

For tree multiplication, associativity is simple but commutativity is more complicated. From the definition of multiplication we have with simple reasoning: 

\begin{prop}\label{propmas}
Multiplication is associative. 
\end {prop}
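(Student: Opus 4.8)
The plan is to trade the geometric ``merging'' definition of multiplication for an explicit description of the product tree on a Cartesian vertex set; once that is done, associativity will reduce to the associativity of the Cartesian product together with a routine check that the two resulting parent relations coincide. So the argument has two parts: a reformulation lemma, and a short case analysis.

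First I would record the following reformulation. Writing $V(A)$ for the vertex set of $A$ and $\mathrm{par}_A$ for its parent map, I claim that $A\cdot B$ may be realized on $V(A)\times V(B)$, rooted at $(r(A),r(B))$, with parent map
\[
\mathrm{par}((x,b))=
\begin{cases}
(x,\mathrm{par}_B(b)) & \text{if } b\neq r(B),\\
(\mathrm{par}_A(x),r(B)) & \text{if } b= r(B)\text{ and }x\neq r(A).
\end{cases}
\]
These two lines exhaust all non-root vertices. To match this against the stated definition, note that for fixed $x$ the slice $\{(x,b):b\in V(B)\}$ carries exactly the structure of a copy $B_x$ of $B$ rooted at $(x,r(B))$; the first line then says the subtrees of $r(B)$ are hung below $x$, i.e.\ they ``become new subtrees of $x$'' as required, while the second line links the copies $B_x$ through their roots according to the edges of $A$, which is precisely the effect of merging $r(B)$ with each vertex of $A$. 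In short, $A\cdot B$ is $A$ with every vertex expanded into a copy of $B$, and in particular $n_{A\cdot B}=n_An_B$, consistent with Proposition~\ref{prop-numbers}.

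With this in hand I would apply the description twice. Identifying $((x,b),c)$ with $(x,(b,c))$ through the canonical bijection of Cartesian products, both $(A\cdot B)\cdot C$ and $A\cdot(B\cdot C)$ live on $V(A)\times V(B)\times V(C)$, and their roots $((r(A),r(B)),r(C))$ and $(r(A),(r(B),r(C)))$ correspond. It remains to verify that the two parent maps agree, which I would do by splitting on whether $c=r(C)$, whether $b=r(B)$, and whether $x=r(A)$. In each branch both constructions return the same underlying triple: for instance when $c=r(C)$ and $b\neq r(B)$ both give $(x,\mathrm{par}_B(b),r(C))$, and when $c=r(C)$, $b=r(B)$, $x\neq r(A)$ both give $(\mathrm{par}_A(x),r(B),r(C))$. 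Hence the identification is a root-preserving isomorphism of rooted trees, and since our trees are unordered this is exactly the equality $(A\cdot B)\cdot C=A\cdot(B\cdot C)$.

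The only genuine work will be the reformulation step, namely translating ``merging $r(B)$ with each vertex of $A$'' into the Cartesian-product parent map and confirming it reproduces the definition. Once multiplication is seen as vertex-for-vertex substitution of the second factor, associativity is simply the statement that substituting $C$ into $B$ and placing the result at each vertex of $A$ yields the same tree as first expanding $A$ by $B$ and then substituting $C$ everywhere; the Cartesian-product bookkeeping makes this precise, and the case analysis is entirely mechanical, so I expect no real obstruction beyond getting the reformulation exactly right.
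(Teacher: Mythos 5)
Your proof is correct, but the comparison here is asymmetric: the paper does not actually prove this proposition at all --- it merely states that associativity follows ``with simple reasoning'' from the definition of multiplication, and moves on. Your argument supplies precisely the formalization that this remark glosses over. The key step, realizing $A\cdot B$ on the vertex set $V(A)\times V(B)$ with the two-case parent map, is a faithful translation of the paper's merging definition: the slice $\{x\}\times V(B)$ is the copy of $B$ merged at the vertex $x$, the first case hangs the subtrees of $r(B)$ below $x$, and the second case preserves the edges of $A$ among the merged roots. Once both triple products are transported to $V(A)\times V(B)\times V(C)$, your case analysis is exhaustive and each branch yields the same parent, so the canonical bijection is a root-preserving isomorphism, which for unordered trees is exactly equality. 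Beyond rigor, your route buys reusability: the coordinate description gives the identity $n_{A\cdot B}=n_A n_B$ of Proposition~\ref{prop-numbers} for free, and it also makes immediate the leaf-counting identity $e_{A\cdot B}=n_A e_B$ (for $B\neq{\bf 1}$) that the paper invokes without proof in Propositions~\ref{propmcom1} and~\ref{propmcom2}, since the leaves of $A\cdot B$ are exactly the pairs $(x,b)$ with $b$ a leaf of $B$. The only cost is that the reformulation itself must be verified --- it is the one place an error could hide --- and your check of it is sound.
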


That is
$(A\cdot B)\cdot C = A\cdot (B\cdot C)$.
For a positive integer $k>1$ and a tree $A$ we can define the power $T=A^k$ as the product of $k$ copies of $A$. Due to Propositions \ref{propmas} and \ref {prop-numbers} the multiplications can be done in any order and we have $n_T=n_A^k$. Again, for any given $k$, the different trees of $n_T$  vertices obtained as $T=A^k$ are only $f_{n_A}$.

Multiplication is generally not commutative. For a product  $A\cdot B$ we consider the cases $n_A=n_B$ and $n_A>n_B$ (the case $n_A<n_B$ is symmetric), and pose the conditions below. Recall that, for any tree $X$, $e_X$ and $s_X$ respectively denote the number of leaves of $X$ and the number of subtrees of {\bf \em r}$\,(X)$.
For $n_A>n_B$ our 
conditions are only necessary. 

\begin{prop}\label{propmcom1}
For $n_A=n_B$ 
we have $A\cdot B = B\cdot A$ if and only if $A=B$.
\end {prop}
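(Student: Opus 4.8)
The forward (nontrivial) direction is the substance here; the converse is immediate, since $A=B$ gives $A\cdot B=A\cdot A=B\cdot A$. The plan is to read the multiset of root-subtrees of a product directly off the definition of multiplication, and then exploit the fact that when $n_A=n_B$ the vertex counts split that multiset cleanly into a ``large'' part and a ``small'' part.

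First I would record a structural identity. Let $A_1,\dots,A_{s_A}$ be the subtrees of the root of $A$ and $B_1,\dots,B_{s_B}$ those of the root of $B$. By definition, $A\cdot B$ keeps the whole tree $A$ and merely attaches, at every vertex of $A$, a fresh copy of each $B_j$; crucially, only the original vertices of $A$ are merged with a root of $B$, so the attached copies receive no further attachments and remain pristine. Consequently the skeleton $A$ is preserved: below the root-child that carried $A_i$ the whole construction recurs, giving exactly $A_i\cdot B$, while the copies of $B_1,\dots,B_{s_B}$ grafted at the root are untouched. Hence the multiset of root-subtrees of $A\cdot B$ is $\{A_1\cdot B,\dots,A_{s_A}\cdot B\}\cup\{B_1,\dots,B_{s_B}\}$, and symmetrically for $B\cdot A$. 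This is the one step that must be argued carefully, since the small-size part of the multiset has to come out as precisely $\{B_1,\dots,B_{s_B}\}$, unentangled with $A$.

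Now suppose $n_A=n_B=n$ and $A\cdot B=B\cdot A$. Equal unordered trees have identical multisets of root-subtrees, so
\[
\{A_1\cdot B,\dots,A_{s_A}\cdot B\}\cup\{B_1,\dots,B_{s_B}\}=\{B_1\cdot A,\dots,B_{s_B}\cdot A\}\cup\{A_1,\dots,A_{s_A}\}.
\]
By Proposition~\ref{prop-numbers}, each $A_i\cdot B$ has $n_{A_i}\,n\ge n$ vertices and each $B_j\cdot A$ has $n_{B_j}\,n\ge n$ vertices, whereas each $A_i$ and each $B_j$ is a proper subtree and so has at most $n-1$ vertices. The two size ranges are disjoint, so restricting the displayed multiset equality to the members of fewer than $n$ vertices yields $\{B_1,\dots,B_{s_B}\}=\{A_1,\dots,A_{s_A}\}$: the roots of $A$ and $B$ carry the same multiset of subtrees. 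Since a tree is determined by its root together with this multiset, $A=B$. The main obstacle is thus not the bookkeeping but pinning down the structural identity; once that is fixed, the separation at the threshold $n$ (which is exactly the feature special to $n_A=n_B$ and the one that fails when $n_A>n_B$) makes the conclusion immediate and, notably, needs no induction.
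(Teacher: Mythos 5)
Your proof is correct, but it takes a genuinely different route from the paper's. The paper's own proof works at the leaves: from the construction it gets $e_{A\cdot B}=n_Ae_B$ and $e_{B\cdot A}=n_Be_A$, so $A\cdot B=B\cdot A$ forces $e_A=e_B$; it then observes that the subtrees of $A\cdot B$ rooted at the former leaves of $A$ are copies of $B$ with exactly $n_B=n_A$ vertices, that every other vertex-rooted subtree has a different vertex count, and that these copies must therefore match the copies of $A$ sitting at the former leaves of $B$ inside $B\cdot A$, giving $A=B$. You instead work entirely at the root: your structural identity says the root-subtrees of $A\cdot B$ form the multiset $\{A_1\cdot B,\dots,A_{s_A}\cdot B\}\cup\{B_1,\dots,B_{s_B}\}$, and the threshold $n$ splits this multiset cleanly (each $A_i\cdot B$ has $n_{A_i}n\geq n$ vertices, each $B_j$ at most $n-1$), so restricting the multiset equality to the part below the threshold identifies the root-subtree multisets of $A$ and $B$, hence $A=B$. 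Both arguments exploit the same phenomenon --- with $n_A=n_B$, pieces of size $<n$ cannot be confused with pieces of size $\geq n$ --- but your decomposition is at the root while the paper's is at the leaves. Your version buys two things: it avoids the slightly informal identification of ``former leaves'' inside an unordered product (equality of trees immediately gives equality of root-subtree multisets), and it is essentially the mechanism the paper later formalizes as Proposition~\ref{prop-B} for the factorization algorithm, so it integrates naturally with the rest of the theory. The paper's leaf count buys something too: it generalizes to $n_A\neq n_B$ and directly yields condition (i) of Proposition~\ref{propmcom2}, whereas your threshold separation is, as you yourself note, exactly the feature that is special to $n_A=n_B$.
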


\begin{proof}
The if part is immediate. For the only if part let $T=A\cdot B$ and $U=B\cdot A$. From the construction of the two products
we immediately have $e_T=n_Ae_B$ and $e_U=n_Be_A$. If $T=U$ we have $e_T= e_U$ then $n_Ae_B=n_Be_A$, then $e_A=e_B$ since $n_A=n_B$. 
Note that $T$ and $U$ contain $e_A=e_B$ subtrees rooted in the former leaves of $A$ and $B$ respectively, each coinciding with $B$ and $A$ respectively. Each of these subtrees contains $n_B=n_A$ vertices, while all the other subtrees of $T,U$ contain a different number of vertices. Then for having $T=U$ the former two groups  of subtrees should be identical, that is each subtree coinciding with $B$ in $T$ must be equal to a subtree coinciding with $A$ in $U$. That is $A=B$.
\end{proof}


\begin{prop}\label{propmcom2}
For $n_A>n_B$ 
we have $A\cdot B = B\cdot A$ only if the following conditions are all verified:

\noindent (i) $n_a/e_A = n_B/e_B$;

\noindent (ii) $B$ is a proper subtree of $A$;

\noindent (iii) if $s_A\geq s_B$ all the subtrees of {\bf \em r}$\,(B)$ must be equal to some subtrees of {\bf \em r}$\,(A)$.
\end {prop}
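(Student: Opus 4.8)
The plan is to put $T=A\cdot B$ and $U=B\cdot A$, assume $T=U$, and argue entirely through the multiset of subtrees hanging from the common root. The structural fact I would establish first, reading it off the definition of multiplication, is the following. Writing $A_1,\dots,A_{s_A}$ for the subtrees of $r(A)$ and $B_1,\dots,B_{s_B}$ for the subtrees of $r(B)$, the subtrees of $r(T)$ form the multiset $\{A_i\cdot B\}_i\uplus\{B_j\}_j$: each vertex of $A$ acquires the subtrees of $r(B)$, so the former child of $r(A)$ that rooted $A_i$ now roots $A_i\cdot B$, while the grafted copies of the $B_j$ hang un-multiplied from $r(A)$. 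Symmetrically, the subtrees of $r(U)$ form $\{B_j\cdot A\}_j\uplus\{A_i\}_i$. By Proposition~\ref{prop-numbers} these pieces have sizes $n_{A_i}n_B$, $n_{B_j}$, $n_{B_j}n_A$ and $n_{A_i}$ respectively, and $T=U$ forces the two multisets to coincide. Condition~(i) then drops out exactly as in Proposition~\ref{propmcom1}: counting leaves gives $e_T=n_Ae_B$ and $e_U=n_Be_A$, so $e_T=e_U$ yields $n_A/e_A=n_B/e_B$.

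For (ii) I would note that at each former leaf $\ell$ of $A$ the subtree of $T$ rooted at $\ell$ is $\mathbf{1}\cdot B=B$, so $B$ occurs as a subtree of $T$, hence of $U$. I would then classify the subtrees of $U=B\cdot A$ by the vertex rooting them: one rooted at a vertex inherited from $B$ equals $B_w\cdot A$ for the corresponding subtree $B_w$ of $B$, so its size is a multiple of $n_A$ and therefore at least $n_A>n_B$, too large to be $B$. The copy of $B$ must therefore sit inside a grafted copy of some $A_i$, which is precisely the statement that $B$ is a proper subtree of $A$.

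For (iii) I would exploit the same size bookkeeping on the root-subtree identity $\{A_i\cdot B\}_i\uplus\{B_j\}_j=\{B_j\cdot A\}_j\uplus\{A_i\}_i$. Every $B_j\cdot A$ has size $\geq n_A$, while every bare $B_j$ has size $\leq n_B-1<n_A$; hence on the $U$-side exactly the $s_B$ products $B_j\cdot A$ reach size $n_A$, and inside the $T$-side they can only be matched among the $A_i\cdot B$, since the bare $B_j$ are too small. This already forces $s_B\leq s_A$, so the hypothesis $s_A\geq s_B$ turns out to be automatic. Cancelling these $s_B$ equal pieces from both multisets leaves an identity of the form $\{A_i\cdot B : n_{A_i}n_B<n_A\}\uplus\{B_j\}_j=\{A_i\}_i$, and now each $B_j$ on the left must equal some element on the right, i.e.\ some subtree $A_i$ of $r(A)$, which is the claim.

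The step I expect to be the most delicate is the multiset cancellation in (iii): I must make the size-based separation and the matching of multiplicities rigorous in the unordered setting (equal multisets have equal size-$\geq n_A$ parts, which may then be removed termwise), and dispose of the boundary cases. In particular $B=\mathbf{1}$ is excluded, since $n_A>n_B\geq 1$ forces $n_B\geq 2$ and so $r(B)$ genuinely has subtrees; everything else is routine once the root-subtree decomposition of the two products is in place.
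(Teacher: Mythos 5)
Your argument is, in substance, the paper's own proof: condition (i) comes from the leaf counts $e_T=n_Ae_B$, $e_U=n_Be_A$; condition (ii) from locating the copies of $B$ (rooted at former leaves of $A$ in $T$) inside $U$, where they cannot be rooted at a vertex inherited from $B$ because those root subtrees of size at least $n_A$; condition (iii) from separating the subtrees of the common root by size. Your version is organized more cleanly than the paper's, via the explicit multiset identity $\{A_i\cdot B\}_i\uplus\{B_j\}_j=\{B_j\cdot A\}_j\uplus\{A_i\}_i$ and termwise cancellation of the size-$\geq n_A$ parts, and it yields a genuine bonus the paper only hints at: $s_B\leq s_A$ is forced by $T=U$, so the hypothesis ``if $s_A\geq s_B$'' in (iii) is automatic.

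There is, however, one false step, in your closing paragraph: $n_A>n_B\geq 1$ does \emph{not} force $n_B\geq 2$; it forces $n_A\geq 2$, and $B=\mathbf{1}$ remains perfectly compatible with the hypothesis. This is not a cosmetic point, because for $B=\mathbf{1}$ the product is trivially commutative ($A\cdot\mathbf{1}=\mathbf{1}\cdot A=A$) while condition (i) fails: the formula $e_T=n_Ae_B$ on which both you and the paper rely breaks down, since merging a leaf of $A$ with the root of $\mathbf{1}$ leaves it a leaf, giving $e_T=e_A<n_A=n_Ae_{\mathbf{1}}$. So $B\neq\mathbf{1}$ must be taken as an (implicit) additional hypothesis of the proposition --- it cannot be derived from $n_A>n_B$ as you attempt. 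The paper's proof silently makes the same assumption, so this does not separate your argument from the published one; but you should replace the erroneous derivation by an explicit assumption $B\neq\mathbf{1}$ (equivalently $n_B\geq 2$), after which everything you wrote goes through.
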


\begin{proof}
Let $T=A\cdot B$ and $U=B\cdot A$.

\noindent {\em Condition (i)}. 
Immediate from the observation that $T=U$ implies $e_T=e_U$ (see the proof of Proposition 5).

\noindent {\em Condition (ii)}. 
As in the proof of Proposition 5, consider the subtrees of $T,U$ respectively attached to the former leaves of $A$ in $T$ and of $B$ in $U$. Since $n_Ae_B=n_Be_A$ (see the proof above) and $n_A>n_B$ we have $e_A>e_B$. In $T$ there are $e_A$ such subtrees of $n_B$ vertices and in $U$ there are $e_B$ such subtrees of $n_A$ vertices. For having $T=U$ the above subtrees of $T$ (all coinciding with $B$) should be present also in $U$ where, by the construction of $B\cdot A$, they must appear as subtrees of the copies of $A$ in $U$.

\noindent {\em Condition (iii)}. By construction the $s_B$ subtrees of {\bf \em r}$\,(B)$ appear also in $T$ as subtrees of {\bf \em r}$\,(T)$ where they are the ones with fewer vertices because all the others have at least $n_B$ vertices. 
And the $s_A$ subtrees of {\bf \em r}$\,(A)$ appear also in $U$ as subtrees of {\bf \em r}$\,(U)$ where they are the ones with fewer vertices because all the others have at least $n_A$ vertices. Note that all these other subtrees of {\bf \em r}$\,(U)$ have more vertices than the subtrees of {\bf \em r}$\,(B)$ since $n_A>n_B$.
For having $T=U$ the $s_B$ subtrees of {\bf \em r}$\,(B)$ that appear as subtrees of {\bf \em r}$\,(T)$ must be equal to $s_B$ subtrees of {\bf \em r}$\,(U)$ and, for what just seen about these subtrees, they must be equal to $s_B$ subtrees among the ones with fewer vertices, i.e. with subtrees of {\bf \em r}$\,(A)$. This also implies that if $s_A=s_B$ then $A=B$.
\end{proof}


The trees $3=A$ and $2=B$ of Figure 2 do not comply with conditions (i) and (ii) of Proposition~\ref{propmcom2} and
we have $A\cdot B=22$ different from $B\cdot A=20$. Commutative products are in fact quite rare. An example with $A\cdot B= B\cdot A$ is shown in Figure 5 where the three conditions of Proposition \ref{propmcom2} are verified. 
In this particular case we have $A=B^2$ hence $A\cdot B=B^3$.
Finally multiplication is generally not distributive over addition. From Proposition \ref{prop-numbers} we can immediately prove: 

\begin{figure}
\label{fig3}
\begin{center}
\includegraphics[scale=0.6]{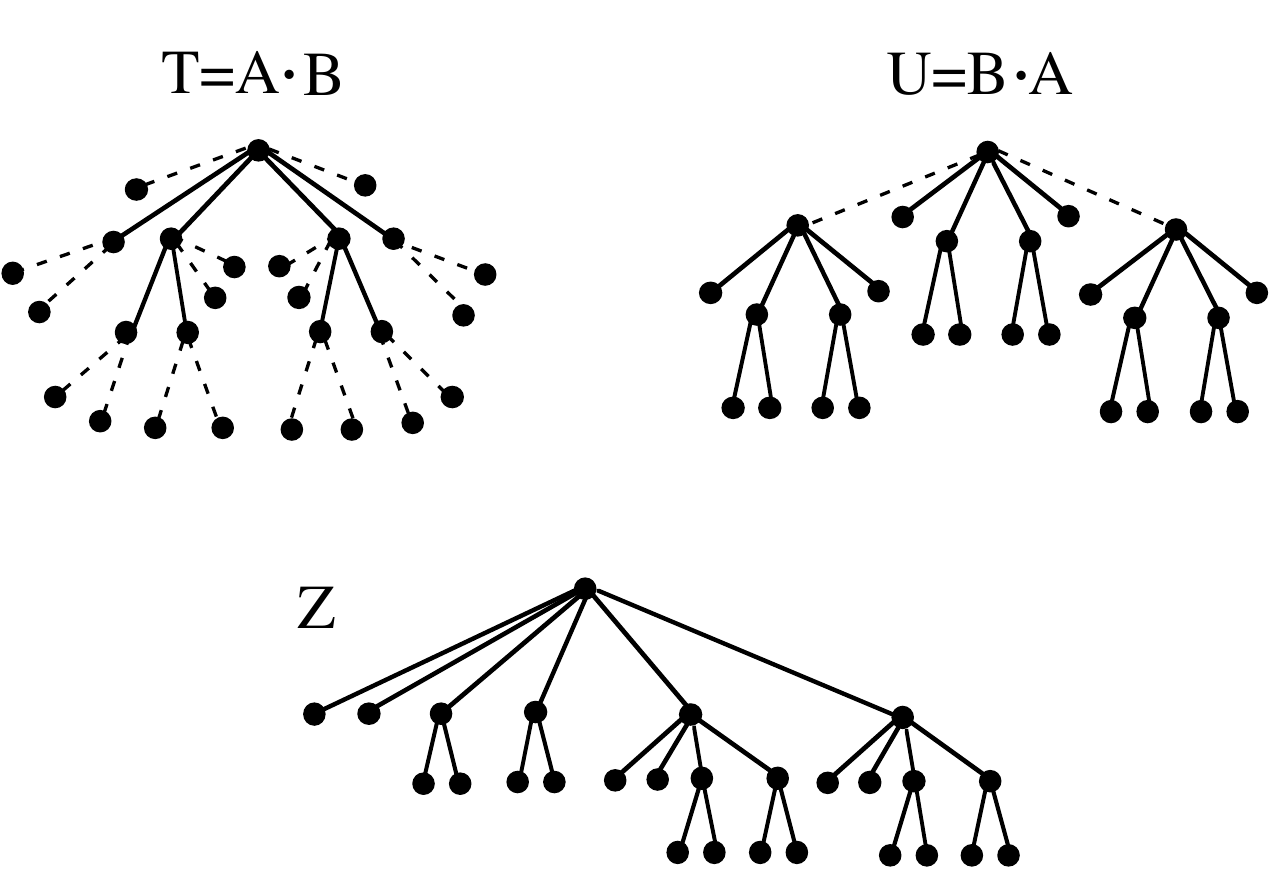}
\end{center} 

\caption {An example of commutative product $A\cdot B= B\cdot A$ for $B$ subtree of $A$. The two trees are shown in solid and dashed lines, respectively. $Z$ is $A\cdot B$ in canonical form. }
\end{figure}

\begin{prop}\label{propmdist}
$(A+B)\cdot C = A\cdot C + B\cdot C$ if and only if $C=\,${\bf 1}.  
\end{prop}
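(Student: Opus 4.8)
The plan is to prove the two implications separately, with the substantive direction reduced to a single arithmetic identity on vertex counts, exactly as the hint ``From Proposition~\ref{prop-numbers}'' suggests.

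For the ``if'' direction I would simply invoke the multiplicative identity $X\cdot\mathbf{1}=X$: when $C=\mathbf{1}$, the left-hand side becomes $(A+B)\cdot\mathbf{1}=A+B$ and the right-hand side becomes $A\cdot\mathbf{1}+B\cdot\mathbf{1}=A+B$, so the equality holds for every pair $A,B$. This requires no computation.

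For the ``only if'' direction the key observation is that two trees can be equal only if they have the same number of vertices; hence it suffices to show that $n_{(A+B)\cdot C}=n_{A\cdot C+B\cdot C}$ forces $n_C=1$. I would apply Proposition~\ref{prop-numbers} to each side. On the left, the addition formula gives $n_{A+B}=n_A+n_B-1$ and then the multiplication formula gives $n_{(A+B)\cdot C}=(n_A+n_B-1)\,n_C$. On the right, the multiplication formula gives $n_{A\cdot C}=n_A n_C$ and $n_{B\cdot C}=n_B n_C$, and then the addition formula gives $n_{A\cdot C+B\cdot C}=(n_A+n_B)\,n_C-1$. Equating the two counts and cancelling the common term $(n_A+n_B)\,n_C$ leaves $-n_C=-1$, i.e.\ $n_C=1$; since $\mathbf{1}$ is the unique tree on one vertex, $C=\mathbf{1}$.

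I do not expect a genuine obstacle here: the entire content lies in the asymmetry of where the ``$-1$'' enters the two sides --- on the left it is absorbed into $A+B$ \emph{before} multiplying by $n_C$, whereas on the right it appears only once, through the single outer addition --- and the vertex arithmetic records precisely this gap. The one point meriting a word of care is that I use equality of vertex counts merely as a \emph{necessary} condition for tree equality; this is enough to extract $C=\mathbf{1}$, and the converse sufficiency is supplied independently by the trivial ``if'' direction, so the biconditional is fully established.
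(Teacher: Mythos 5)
Your proof is correct and follows exactly the route the paper intends: the paper states the proposition as an immediate consequence of Proposition~\ref{prop-numbers}, i.e.\ the vertex-count comparison $(n_A+n_B-1)\,n_C$ versus $(n_A+n_B)\,n_C-1$, which forces $n_C=1$, together with the identity $X\cdot\mathbf{1}=X$ for the converse. Nothing is missing.
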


A basic fact about our arithmetic is that all trees can be generated by the single {\em generator} {\bf 1} using addition and stretch.\footnote{Stretch been included in the operation set to allow the construction of all trees starting from a finite set of generators. The reader may check that addition and multiplication, or stretch and multiplication, are not sufficient for this purpose. } Namely: 

\begin{itemize}

\item Tree {\bf 1} is the generator of itself.

\item Assuming inductively that each of the trees in ${\cal F}_i$ with $1\leq i\leq n-1$ can be generated by the trees of the preceding families, then each tree $T$ in ${\cal F}_n$ can also be generated. In fact if  
{\bf \em r}$\,(T)$ has one subtree $T_1$ then $T$ can be generated as $\bar T_1$; if {\bf \em r}$\,(T)$ has $k\geq2$ subtrees $T_1,T_2,\dots ,T_k$ then $T$ can be generated as $U+V$ where $U$ is $T$ deprived of $T_k$ and $V$ is $T$ deprived of $T_1,T_2,\dots ,T_{k-1}$.  

\end{itemize}


\section{Prime trees} \label{prime}

In the arithmetic of natural numbers the basic operations are addition and multiplication, with $x+0=x$ and $x\cdot 1=x$. Prime numbers under addition have no sense, since all $x$ greater than 1 can be constructed as the sum of two smaller terms other than 0 and $x$. In our arithmetic for trees, instead, primality occurs in relation with addition and multiplication.
In this whole section we refer to trees $T$ with $n_T>1$. We pose:

\begin{defin}\label{def-numbers}
{\em (i) $T$ is {\em prime under addition} {\em (}shortly {\em add-prime)} if can be generated by addition only if the terms are {\bf 1} and $T$ 
(tree {\bf 1} has a companion role of integer 0 in $I\! \! N$).}

\noindent {\em (ii) $T$ is {\em prime under multiplication} {\em (}shortly {\em mult-prime)} if can be generated by multiplication only if the factors are {\bf 1} and $T$.}
\end {defin}

The definition of mult-primality is the natural counterpart of the one of primality in $I\! \! N$. As it may be expected its consequences are not easy to study. 
For add-primality, instead, the situation is quite simple. We have:


\begin{prop}\label{prop-addprime}
$T$ is add-prime if and only if {\bf \em r}$\,(T)$ has only one subtree.
\end {prop}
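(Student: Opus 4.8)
The plan is to prove both directions of the biconditional by relating the number of subtrees $s_T$ of the root to the possibility of writing $T$ as a nontrivial sum. Recall that addition $A+B$ merges the two roots so that the subtrees of $A$ and of $B$ together become the subtrees of the new root; in particular, by the definition of addition, $s_{A+B}=s_A+s_B$ whenever neither $A$ nor $B$ is the identity $\mathbf{1}$ (and $\mathbf{1}$ itself contributes no subtrees, which is why it plays the role of the additive zero). So the structural quantity governing add-primality is simply the out-degree of the root.

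For the forward direction (only if), I would argue by contraposition: suppose $\mathbf{r}(T)$ has $k\geq 2$ subtrees $T_1,\dots,T_k$. Then I can exhibit a nontrivial decomposition $T=A+B$ with both $A,B$ different from $\mathbf{1}$ and from $T$, namely by partitioning the subtrees of the root into two nonempty groups and letting $A$ be the tree whose root carries the first group and $B$ the tree whose root carries the second group. Concretely, take $A$ to be $T$ with subtree $T_k$ removed and $B$ to be the stretch-type tree consisting of a root with $T_k$ as its only subtree; this is essentially the generation scheme already used at the end of Section~\ref{operators}. Since $k\geq 2$, each of $A$ and $B$ has a root with at least one subtree, so each has more than one vertex and is distinct from $T$; hence $T$ is not add-prime.

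For the reverse direction (if), suppose $\mathbf{r}(T)$ has exactly one subtree. I must show any decomposition $T=A+B$ forces $\{A,B\}=\{\mathbf{1},T\}$. Using $s_T=s_A+s_B$ together with $s_T=1$, at least one of $s_A,s_B$ must be $0$; but by the recursive sequence structure of Section~\ref{basic}, the only tree whose root has zero subtrees is $\mathbf{1}$. Thus one summand, say $A$, equals $\mathbf{1}$, and since $\mathbf{1}$ is the additive identity we get $B=T$. Therefore the terms are exactly $\mathbf{1}$ and $T$, so $T$ is add-prime.

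I do not expect a serious obstacle here, since add-primality reduces cleanly to a statement about the root's out-degree. The one point that needs care is making the additive bookkeeping precise: one should state explicitly that $\mathbf{1}$ is the unique tree with $s=0$ and that $s$ is additive under $+$ for non-identity summands, so that ``$s_T=1$'' really does pin down both summands. Once that lemma-level observation is isolated, both implications follow immediately, so the main work is just phrasing the decomposition in the forward direction cleanly and invoking the identity law $\mathbf{1}+B=B$ in the reverse direction.
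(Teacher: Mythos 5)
Your proof is correct and follows essentially the same route as the paper's: the reverse direction rests on the observation that the root's subtree count is additive under $+$ (so a sum of two non-$\mathbf{1}$ trees has at least two subtrees at the root), and the forward direction uses exactly the paper's decomposition of $T$ into $T$ deprived of $T_k$ plus the tree whose root carries $T_k$ alone. The only difference is presentational (you argue by contraposition and isolate the additivity of $s$ explicitly, while the paper phrases both parts as contradictions), not mathematical.
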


\begin{proof}
By contradiction. {\em If part}: for an arbitrary tree $X=A+B$ with $A,B\neq$ {\bf 1}, {\bf \em r}$\,(X)$ has at least two subtrees, then $T\neq X$ for any pair $A,B\neq$ {\bf 1}.
{\em Only if part}: if {\bf \em r}$\,(T)$ has $k>1$ subtrees $T_1,\dots,T_k\neq$ {\bf 1} then $T=U+V$, where $U$ is equal to $T$ deprived of $T_k$ and $V$ is equal to $T$ deprived of $T_1,\dots,T_{k-1}$, with $U,V\neq$ {\bf 1}.
\end{proof}

As a consequence of Proposition \ref{prop-addprime} deciding if a tree is add-prime is computationally ``easy". From Proposition \ref{prop-addprime}, and from the construction given in the {\bf DR} rule we have:

\begin{prop} \label{num-addprime}
For $n\geq 2$ the number of add-prime trees is $f_{n-1}$.
\end{prop}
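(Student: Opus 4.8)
The plan is to exhibit an explicit bijection between the trees of $\mathcal{F}_{n-1}$ and the add-prime trees of $\mathcal{F}_n$, from which the count $f_{n-1}$ follows immediately. The starting point is Proposition~\ref{prop-addprime}, which characterizes add-primality purely structurally: a tree $T$ is add-prime if and only if {\bf \em r}$\,(T)$ has exactly one subtree. The key observation is that having a root with a single subtree is precisely the image of the stretch operation, so that $T$ satisfies this condition if and only if $T=\bar A$ for some tree $A$, where $A$ is the unique subtree hanging from {\bf \em r}$\,(T)$.

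First I would fix the map $A\mapsto\bar A$ and check that it lands in the right place. By Proposition~\ref{prop-numbers} we have $n_{\bar A}=n_A+1$, so if $A\in\mathcal{F}_{n-1}$ then $\bar A\in\mathcal{F}_n$; and by construction {\bf \em r}$\,(\bar A)$ has exactly one subtree, so $\bar A$ is add-prime by Proposition~\ref{prop-addprime}. Thus stretching maps $\mathcal{F}_{n-1}$ into the set of add-prime trees of $\mathcal{F}_n$.

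Next I would verify that this map is a bijection. For surjectivity, take any add-prime $T\in\mathcal{F}_n$; by Proposition~\ref{prop-addprime} its root carries a single subtree $A$, so $T=\bar A$, and $n_A=n_T-1=n-1$ places $A\in\mathcal{F}_{n-1}$. For injectivity, since trees are unordered and $\bar A$ has $A$ as its only root-subtree, $\bar A=\bar B$ forces $A=B$. Note that this stretching map is exactly the second branch of the Doubling Rule {\bf DR} (append a new root to $A$), so the statement may be read equivalently as saying that every add-prime tree of $\mathcal{F}_n$ arises from a unique tree of $\mathcal{F}_{n-1}$ through that branch of {\bf DR}. Counting both sides then gives $f_{n-1}$ add-prime trees, and for $n\geq 2$ the subtree $A$ has $n_A=n-1\geq 1$ vertices, so it is a legitimate non-empty tree.

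Since the argument is a direct bijection resting on Propositions~\ref{prop-numbers} and~\ref{prop-addprime}, I do not expect a genuine obstacle. The only point demanding a little care is the unordered-tree convention: I would be explicit that equality of trees means equality up to the canonical form of Section~\ref{basic}, so that the unique subtree of a stretched root is well defined and the injectivity of the map really does follow.
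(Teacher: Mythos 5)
Your proof is correct and follows essentially the same route as the paper: the paper derives the count directly from Proposition~\ref{prop-addprime} together with the second branch of the Doubling Rule {\bf DR} (appending a new root), which is precisely the stretch bijection $A\mapsto\bar A$ you make explicit. Your version simply spells out the injectivity and surjectivity that the paper leaves implicit.
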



From Equation (1) we have: $f_{n-1}/f_n \rightarrow \,\sim 0.34$ for $n \rightarrow \infty$, that is the add-prime trees in ${\cal F}_n$ are asymptotically about one third of the total. Each of the remaining {\em add-composite} (i.e., non add-prime) trees $T$ can be uniquely factorized in $s_T$ factors.

For mult-primality we start with two immediate statements respectively derived from Proposition \ref{prop-numbers}, and from the definition of multiplication for trees with at least two vertices. More complex conditions for mult-primality can be found in~\cite{L15}.

\begin{prop}\label{prop-mult1}
If $n$ is a prime number all the trees with $n$ vertices are mult-prime.
\end {prop}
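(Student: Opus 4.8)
The plan is to reduce the statement to the multiplicativity of the vertex count established in Proposition~\ref{prop-numbers}, so that arithmetic primality of $n$ forces primality of the tree. First I would take an arbitrary factorization $T = A\cdot B$ of a tree $T$ with $n_T = n$ vertices, where $n$ is prime, and aim to show that one of the factors must be {\bf 1}.

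The key step is to apply Proposition~\ref{prop-numbers}, which gives $n_T = n_A\, n_B$, hence $n = n_A\, n_B$. Since $n$ is prime and $n_A, n_B$ are positive integers, one of $n_A, n_B$ equals $1$. The only tree with a single vertex is {\bf 1}, so (say) $A = {\bf 1}$; using the identity $A\cdot{\bf 1} = {\bf 1}\cdot A = A$ recorded in the definition of multiplication, the other factor must then equal $T$ itself. The symmetric case $B = {\bf 1}$ is handled identically.

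Finally I would conclude by invoking Definition~\ref{def-numbers}(ii): having shown that every factorization $T = A\cdot B$ has $\{A,B\} = \{{\bf 1},T\}$, the tree $T$ can be generated by multiplication only from the factors {\bf 1} and $T$, which is precisely what it means for $T$ to be mult-prime. Since $T$ was an arbitrary tree of $n$ vertices, the claim follows for all such trees.

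I do not expect a genuine obstacle here: the entire content is carried by the multiplicative law $n_{A\cdot B} = n_A\, n_B$, which transfers ordinary integer primality to tree mult-primality verbatim. The only point requiring a word of care is that $n_X = 1$ characterizes the tree {\bf 1} uniquely (so that a factor of ``size one'' really is the multiplicative identity), but this is immediate since ${\cal F}_1$ contains the single tree {\bf 1}.
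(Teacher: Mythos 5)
Your proof is correct and matches the paper's approach: the paper presents this proposition as an immediate consequence of Proposition~\ref{prop-numbers}, and your argument ($n = n_A n_B$ with $n$ prime forces one factor to have a single vertex, hence equal {\bf 1}) is precisely that derivation, spelled out. The care you take in noting that $n_X=1$ uniquely characterizes {\bf 1} is the right small detail to make the ``immediate'' claim rigorous.
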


\begin{prop}\label{prop-mult3}
If {\bf \em r}$\,(T)$ has only one subtree then $T$ is mult-prime.
\end {prop}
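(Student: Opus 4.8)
The plan is to reduce the statement to a simple counting argument on the number of subtrees of the root, exploiting the fact that a nontrivial multiplication can only \emph{increase} this count. Concretely, I would first establish the identity that for any product $T=A\cdot B$ one has $s_T=s_A+s_B$, and then observe that the hypothesis $s_T=1$ leaves no room for a nontrivial factorization.

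To prove the identity I would unwind the definition of multiplication. In forming $A\cdot B$ the root of the result is {\bf \em r}$\,(A)$ with a copy of $B$ merged onto it; the construction preserves the tree structure of $A$, so {\bf \em r}$\,(A)$ keeps its $s_A$ original children, and it additionally attaches to {\bf \em r}$\,(A)$ the subtrees of {\bf \em r}$\,(B)$, contributing $s_B$ further children. No other part of $B$ is attached at the root, since the copies of $B$ placed at the non-root vertices of $A$ hang strictly below {\bf \em r}$\,(A)$. Hence {\bf \em r}$\,(T)$ has exactly $s_A+s_B$ subtrees, i.e. $s_T=s_A+s_B$.

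With the identity in hand, suppose for contradiction that a tree $T$ with $s_T=1$ admits a factorization $T=A\cdot B$ with $A,B\neq$ {\bf 1}. Both $A$ and $B$ then have more than one vertex, so each of their roots has at least one subtree, i.e. $s_A\geq 1$ and $s_B\geq 1$; but then $s_T=s_A+s_B\geq 2$, contradicting $s_T=1$ (recall that only the tree {\bf 1} satisfies $s=0$). Therefore every factorization of $T$ must have one factor equal to {\bf 1}, and the other consequently equal to $T$, so $T$ is mult-prime.

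The only delicate point, and the step I would write most carefully, is the identity $s_T=s_A+s_B$: one must verify directly from the definition that multiplying by $B$ appends exactly the $s_B$ subtrees of {\bf \em r}$\,(B)$ at the root while leaving the $s_A$ original root-children of $A$ intact, with no coincidental merging. Everything after that is an elementary inequality. It is worth remarking that this yields a clean parallel with Proposition~\ref{prop-addprime}: a single subtree at the root forces primality under both operations, although for multiplication the condition is only sufficient (by Proposition~\ref{prop-mult1} there are further mult-primes) rather than a full characterization.
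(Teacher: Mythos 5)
Your proof is correct and matches the paper's (implicit) argument: the paper states this proposition as immediate from the definition of multiplication for trees with at least two vertices, which is precisely your observation that a product $A\cdot B$ with $A,B\neq$ {\bf 1} forces $s_T=s_A+s_B\geq 2$ subtrees at the root. Your write-up simply makes that counting identity explicit, which is a faithful elaboration rather than a different route.
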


The converse of Propositions \ref{prop-mult1} and \ref{prop-mult3} do not hold in our arithmetic. That is if $n_T$ is a composite number or {\bf \em r}$\,(T)$ has more than one subtree, tree $T$ may still be mult-prime. In a sense mult-prime trees are more numerous than primes in $I\! \! N$. For example out of the twenty trees in ${\cal F}_6$ (see Figure 2) only trees 20, 22, 24, and 28 are {\em mult-composite} (i.e. non mult-prime), as they can be built as $2\cdot 3$, $3\cdot 2$, $4\cdot 2$, and $2\cdot 4$, respectively. 


Since if $n_T$ is prime $T$ is mult-prime, and the problem of deciding if $n_T$ is prime is polynomial in $\log n_T$, 
deciding if $T$ is mult-prime is straightforward for $n_T$ prime. However the problem is difficult  for $n_T$ composite because $T$ may be mult-prime or mult-composite. 
An algorithm for $n_T$ composite may consist of building all the products $A\cdot B$ and $B\cdot A$ of two trees $A, B$ of $a,b$ vertices respectively for all the factorizations of $n_T$ as $a\cdot b$, and comparing $T$ with these products looking for a match. However this method is impracticable unless $n_T$ is very smal, then we must find a different way to decide  mult-primality. To this end consider a property of product trees 
based on the observation that, if
$T=A\cdot B$, all the subtrees of {\bf \em r}$\,(B)$ are also subtrees of {\bf \em r}$\,(T)$. Namely:

\begin{prop}\label{prop-B}
Let $T=A\cdot B$ with $A,B \neq$ {\bf 1}, and let $Y$ be a subtree of {\bf \em r}$\,(B)$ with maximum number $n_Y$ of vertices. Then the subtrees of {\bf \em r}$\,(B)$ are exactly the subtrees of {\bf \em r}$\,(T)$ with at most $n_Y$ vertices. 
\end{prop}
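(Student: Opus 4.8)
The plan is to read off the structure of the subtrees of {\bf \em r}$\,(T)$ directly from the definition of multiplication, and then to separate them into two groups according to their number of vertices.

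First I would determine exactly which subtrees hang from {\bf \em r}$\,(T)$. Recall that $T=A\cdot B$ is built by attaching, to every vertex $x$ of $A$, fresh copies of all the subtrees of {\bf \em r}$\,(B)$, and that {\bf \em r}$\,(T)$ is the root {\bf \em r}$\,(A)$ of $A$ after this expansion. Hence the children of {\bf \em r}$\,(T)$ are of two kinds: (a) the roots of the copies $Y_1,\dots,Y_{s_B}$ of the subtrees of {\bf \em r}$\,(B)$, which are attached unchanged because the expansion only adds children to the vertices of $A$ and not to the newly inserted vertices; and (b) the roots of the trees $A_j\cdot B$, one for each subtree $A_j$ of {\bf \em r}$\,(A)$ in $A$, since each such subtree of $A$ is itself expanded into its product with $B$. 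Thus the multiset of subtrees of {\bf \em r}$\,(T)$ is precisely $\{Y_1,\dots,Y_{s_B}\}$ together with $\{A_j\cdot B\}$, and the first group is exactly the set of subtrees of {\bf \em r}$\,(B)$.

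Next I would compare the sizes of the two groups. By the maximality of $Y$ each $Y_i$ has at most $n_Y$ vertices, and since $Y$ is a proper part of $B$ we have $n_Y\le n_B-1<n_B$. On the other hand, by Proposition~\ref{prop-numbers} each subtree of type (b) has $n_{A_j}\,n_B\ge n_B$ vertices, because $n_{A_j}\ge 1$. Therefore every subtree of type (a) has at most $n_Y$ vertices while every subtree of type (b) has strictly more than $n_Y$ vertices, the two ranges being separated by the strict inequality $n_Y<n_B$. Since groups (a) and (b) together exhaust the subtrees of {\bf \em r}$\,(T)$, the subtrees of {\bf \em r}$\,(T)$ with at most $n_Y$ vertices are exactly those of type (a), namely the subtrees of {\bf \em r}$\,(B)$.

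I expect the only delicate point to be the correct reading of the product's structure in the first step: one must justify that the directly attached copies of the subtrees of {\bf \em r}$\,(B)$ are left untouched by the expansion, so that they keep their original sizes $n_{Y_i}\le n_Y$, and that every remaining subtree of {\bf \em r}$\,(T)$ arises as a full product $A_j\cdot B$ and hence carries the factor $n_B$ in its vertex count. Once this dichotomy is pinned down, the separation by the strict inequality $n_Y<n_B$ is immediate, and the hypothesis $A\neq$ {\bf 1} guarantees the two groups genuinely coexist.
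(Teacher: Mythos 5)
Your proof is correct and follows essentially the same route as the paper's: both rest on the dichotomy between the copies of the subtrees of {\bf \em r}$\,(B)$ attached directly to {\bf \em r}$\,(T)$ and the former subtrees of {\bf \em r}$\,(A)$, which after expansion have at least $n_B>n_Y$ vertices. Your version merely makes the paper's terse size argument explicit, by identifying the expanded subtrees as the products $A_j\cdot B$ of size $n_{A_j}n_B\geq n_B$.
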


\begin{proof}
Since $T=A\cdot B$, the subtree $Y$ has been inserted at {\bf \em r}$\,(T)$ as the largest subtree of {\bf \em r}$\,(B)$. Then also
the subtrees of {\bf \em r}$\,(T)$ with at most $n_Y$ vertices must have been inserted at {\bf \em r}$\,(T)$ as subtrees of 
{\bf \em r}$\,(B)$ since they have too few vertices for deriving from former subtrees of {\bf \em r}$\,(A)$ whose vertices are merged with $B$ in $T$. Furthermore the remaining subtrees of {\bf \em r}$\,(T)$ cannot be subtrees of
{\bf \em r}$\,(B)$ since they have too many vertices by the hypothesis that $Y$ is a largest subtree of  {\bf \em r}$\,(B)$.
\end{proof}

In the mult-composite tree $Z$ of Figure 5, if the first subtree of {\bf \em r}$\,(Z)$ (containing one vertex) is a subtree of maximal cardinallity of one of the factors, $B$ in this case, then $B$ consists of a root plus the first two subtrees of {\bf \em r}$\,(Z)$. 
Similarly, if the third subtree of {\bf \em r}$\,(Z)$ is a subtree of maximal cardinality of one of the factors, $A$ in this case, then $A$ consists of a root plus the first four subtrees of {\bf \em r}$\,(Z)$. We pose:

\begin{notat} \label{notat1}
For an arbitrary tree $T$:
{\em (i)} $G_1,\dots,G_r$ are the groups of subtrees of {\bf \em r}$\,(T)$ with the same number $g_1,\dots, g_r$ of vertices, $g_1<g_2<\dots<g_r$;
$\;\;${\em (ii)} $H_i=\bigcup^i_{j=1}G_j$, $1\leq i\leq r$, i.e. each $H_i$ is the group of subtrees of {\bf \em r}$\,(T)$ with up to $g_i$ vertices. 
\end{notat}
Based on Propositions \ref{prop-B} and Notation \ref{notat1} we can build the primality Algorithm MP of Figure 6 that requires polynomial time in the number of vertices. Since all trees with a prime number $n$ of vertices are mult-prime, MP is intended for testing trees with $n$ composite. However MP works for all trees and can always be applied to avoid a preliminary test for the primality of $n$.


\begin{figure} \label{MP}

\begin{center}
\fbox{
\begin{minipage}{12.8cm}
{\bf algorithm} MP($T$,$\,n$)

\vspace{2mm}

{\bf 1.} CF($T$,$\,n$);

\vspace{1mm}
\hspace{5mm} // transform $T$ in canonical form with Algorithm CF of Figure 4

\vspace{1mm}
{\bf 2.} {\bf let} $H_1,\dots,H_r$ be the groups of subtrees of {\bf \em r}$\,(T)$ as in Notation \ref{notat1};

\vspace{1mm}
{\bf 3.} {\bf for} $1\leq i\leq r-1$

\vspace{1mm}
\hspace{10mm} {\bf copy} $T$ into $Z$;

\vspace{1mm}
\hspace{10mm} {\bf traverse} $Z$ in preorder

\vspace{1mm}
\hspace{17mm} {\bf forany} vertex $x$ encountered in the traversal

\vspace{1mm}
\hspace{24mm} {\bf if} $x$ has all the subtrees of $H_i$ {\bf erase} these subtrees in $Z$

\vspace{1mm}
\hspace{24mm} {\bf else} {\bf exit} from the $i$-th cycle;

\vspace{1mm}
\hspace{17mm} {\bf return} MULT-COMPOSITE;
 
\vspace{1mm}
{\bf 4.} {\bf return} MULT-PRIME.

\end{minipage}
}

\end{center}

\caption {\small Structure of Algorithm MP for deciding if a tree $T$ of $n$ vertices is mult-prime. 
} 

\end{figure}


\begin{prop}\label{prop-MP}
Mult-primality of a tree $T$ can be decided in time polynomial in $n_T$. 
\end {prop}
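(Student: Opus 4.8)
The plan is to show that Algorithm MP is a correct decision procedure for mult-primality and that it runs in time polynomial in $n_T$; the proposition follows. I would split the argument into a separation lemma, completeness, soundness, and a running-time bound, with the combinatorial heart being the size-separation observation that I isolate first.

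First I would establish the separation lemma. Suppose $T=A\cdot B$ with $A,B\neq$ {\bf 1}, let $Y$ be a largest subtree of {\bf \em r}$\,(B)$ and $n_Y$ its number of vertices. By the construction of multiplication, $T$ is the ``skeleton'' $A$ with a copy of each subtree of {\bf \em r}$\,(B)$ (the $B$-pendants) hung at every vertex of that skeleton. I would observe that every subtree of $T$ rooted at a skeleton vertex already contains a full copy of $B$, because even a leaf of $A$ contributes a whole copy of $B$; hence each skeleton subtree has at least $n_B\ge n_Y+1$ vertices, while each $B$-pendant has at most $n_Y$ vertices. Consequently, at \emph{every} skeleton vertex the subtrees of size $\le n_Y$ are exactly the $B$-pendants, i.e.\ exactly the subtrees of {\bf \em r}$\,(B)$. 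Combined with Proposition \ref{prop-B}, this shows that the subtrees of {\bf \em r}$\,(B)$ form precisely the cumulative group $H_i$ of Notation \ref{notat1} with $g_i=n_Y$; and since $A\neq$ {\bf 1} forces a skeleton subtree of size $\ge n_B>n_Y$ at the root, this index satisfies $i\le r-1$, so MP does examine it.

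Next, completeness. For the index $i$ just identified I would run the $i$-th pass of Step 3 on the canonical form of $T$ and check that it never exits: the separation lemma guarantees that each skeleton vertex carries exactly the multiset $H_i$ among its subtrees of size $\le g_i$, so the test ``$x$ has all the subtrees of $H_i$'' succeeds at every vertex actually visited, and the preorder traversal only ever visits skeleton vertices because each $B$-pendant is erased before the traversal can descend into it. Hence MP returns MULT-COMPOSITE, either at this $i$ or at some earlier one. For soundness I would argue the reverse: if the $i$-th pass completes, let $A'$ be the residual tree $Z$ and let $B$ be the tree whose root has the subtrees $H_i$. Because exactly one copy of $H_i$ was removed at each surviving vertex, re-attaching $H_i$ to every vertex of $A'$ rebuilds $T$, which is by definition $A'\cdot B=T$; moreover $B\neq$ {\bf 1} since $H_i$ is nonempty, and $A'\neq$ {\bf 1} since $i\le r-1$ leaves at least one large subtree of {\bf \em r}$\,(T)$ uninvolved in the erasing. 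Thus any MULT-COMPOSITE answer certifies a genuine nontrivial factorization, and any MULT-PRIME answer is correct.

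Finally the running time. Step 1 invokes CF, which is $O(n^2)$; computing the groups $H_1,\dots,H_r$ from the subtree sizes of {\bf \em r}$\,(T)$ is $O(n)$, and $r\le n$. The loop of Step 3 runs at most $r-1\le n$ times, and each pass copies $T$ and performs one preorder traversal in which, at each of the $O(n)$ vertices, the multiset of small subtrees is matched against $H_i$ by comparing canonical sequences, each comparison costing $O(n)$; this is polynomial per pass, so the whole procedure is polynomial in $n$. The part I expect to demand the most care is the separation lemma together with its use in soundness: one must verify that the greedy erasing cannot mistakenly strip a skeleton subtree, which is ruled out precisely by the strict size gap $n_B>n_Y$, and that a completed pass really certifies $A'\cdot B=T$ globally rather than merely a vertex-local match — exactly the point at which the definition of multiplication must be invoked.
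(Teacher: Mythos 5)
Your proof is correct and follows essentially the same route as the paper: both establish the proposition by proving correctness and a polynomial time bound for Algorithm MP, with Proposition \ref{prop-B} and the size gap between the subtrees of {\bf \em r}$\,(B)$ and the skeleton copies of $B$ as the key ingredient. Your write-up is in fact more careful than the paper's own proof --- the separation lemma at \emph{every} skeleton vertex, the check that the traversal never descends into an erased pendant, and the explicit soundness direction (a completed pass certifies $A'\cdot B=T$ with $A',B\neq$ {\bf 1}) are left implicit there --- but the underlying approach is the same.
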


\begin{proof}
Refer to Algorithm MP.  {\em Correctness}. Only step {\bf 3} requires an analysis. $Z$ is the changing version of $T$ and is restored at each $i$-th cycle. If one of the groups $H_i$ of subtrees can be erased from $Z$ at all vertices encountered in the traversal, the cycle is completed and the algorithm terminates declaring that $T$ is mult-composite. In fact tree $B$, whose root  has the subtrees in $H_i$, is one of the factors of $T$ (see Proposition \ref{prop-B}). If none of the $i$-cycles can be completed, that is no $H_i$ can be found as being the group of subtrees of $x$ in all vertices $x$ of $Z$, the tree $T$ is mult-prime as declared in step {\bf 4}.

\noindent {\em Complexity}. A superficial analysis of the algorithm is the following. Step {\bf 1} requires O($n^2$) time as discussed for Algorithm CF. Step {\bf 2} is executed with a linear time scan because the tree is now in canonical form and the number of vertices in each subtree of the root has been computed by algorithm CF in step {\bf 1}. Step {\bf 3} requires O($n$) copy operations of $T$ into $Z$ in O($n^2$) time, and O($n$) traversals each composed of O($n$) steps, for a total of O($n^2$) steps. At each step at vertex $x$ the subtrees in $H_i$ must be compared with the subtrees of $x$ with the same cardinality; this can be done by representing such subtrees with their binary sequences $S$ and comparing these sequences. In the worst case vertex $x$ has O($n$) subtrees of length O($n$), so that building and comparing all the sequences takes time O($n^2$), and the total time required by step {\bf 3} is O($n^4$). Note that this analysis is very rough because the number of vertices of $T$ decreases during the traversal, so the stated bound O($n^4$) is exceedingly high.
\end{proof}

Note that if $T$ is mult-composite Algorithm MP allows to find a pair of factors $A,B$ at no extra cost, with $B$ mult-prime. In fact, if a cycle $i$ of step {\bf 3} is completed, the algorithm is interrupted on the {\bf return} statement and the group $H_i$ contains exactly the subtrees of {\bf \em r}$\,(B)$, while the tree $Z$ is reduced to $A$. 
In particular $B$ is the last factor of a product of mult-prime trees, with $T=T_1\cdot T_2 \dots \cdot T_k \cdot B$. If Algorithm MP is not interrupted with the {\bf return} statement, all these factors can be detected. 
As a consequence we have:

\begin{prop}\label{prop-unique}
Mult-factorization of any tree $T$ is unique.
\end{prop}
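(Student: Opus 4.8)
The plan is to prove the statement by strong induction on $n_T$, peeling off the \emph{rightmost} factor. Because multiplication is associative (Proposition \ref{propmas}) but not commutative (Propositions \ref{propmcom1} and \ref{propmcom2}), by ``unique factorization'' I mean that the ordered sequence of mult-prime factors $T=P_1\cdot P_2\cdots P_m$ is determined by $T$. Existence is the easy half: if $T$ is mult-prime the factorization is $T$ itself, and otherwise $T=A\cdot B$ with $A,B\neq\mathbf 1$, whence $n_A,n_B<n_T$ by Proposition \ref{prop-numbers}, so the two factorizations guaranteed inductively concatenate to one for $T$. Likewise a mult-prime $T$ admits only the trivial factorization $(T)$, which settles the base case of the uniqueness argument.

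For the inductive step I would first pin down the rightmost factor. By Proposition \ref{prop-B}, for every nontrivial right factor $B$ (meaning $T=A\cdot B$ for some $A\neq\mathbf 1$) the subtrees of $r(B)$ are forced to be exactly the group $H_i$ of the smallest subtrees of $r(T)$, for some index $i$ in the sense of Notation \ref{notat1}; since $B$ is nothing but a root carrying these subtrees, $B$ is completely determined by $i$, and distinct right factors have distinct sizes. Let $B^{*}$ be the right factor of least size. The first claim is that $B^{*}$ is mult-prime: a proper factorization $B^{*}=C\cdot D$ with $C,D\neq\mathbf 1$ would give $T=(A\cdot C)\cdot D$ by associativity, so that $D$ is a nontrivial right factor of $T$ of strictly smaller size than $B^{*}$, contradicting minimality. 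The second, and crucial, claim is that in \emph{any} mult-prime factorization the last factor equals $B^{*}$.

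The heart of the proof is therefore the following compatibility lemma, which I would establish by a separate induction on $n_T$: if $B$ and $C$ are nontrivial right factors of $T$ with $n_B\le n_C$, then $B$ is itself a right factor of $C$. Granting this, let $T=Q_1\cdots Q_l$ be any mult-prime factorization (with $l\ge 2$ in the composite case); its last factor $Q_l$ is a nontrivial right factor, so by the lemma $B^{*}$ right-divides $Q_l$, and since $Q_l$ is mult-prime and $B^{*}\neq\mathbf 1$ we conclude $Q_l=B^{*}$. Thus the rightmost factor is the same, $B^{*}$, in every factorization. The left quotient $A$ with $T=A\cdot B^{*}$ is then unique as well: by the correctness of Algorithm MP (Proposition \ref{prop-MP}) this $A$ is recovered deterministically from $T$ by erasing the subtrees of $r(B^{*})$ at every vertex that carries them. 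Since $n_A<n_T$, the induction hypothesis yields a unique factorization of $A$, and appending $B^{*}$ gives the unique factorization of $T$.

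I expect the compatibility lemma to be the main obstacle. The intuition is that, writing $T=A'\cdot C$, dividing out the small tree $B^{*}$ — whose root-subtrees are the globally smallest subtrees of $r(T)$ — cannot straddle the boundary between $A'$ and the substituted copies of $C$, precisely because $n_{B^{*}}\le n_C$; hence the division takes place entirely inside the $C$-blocks, which forces $C=D\cdot B^{*}$ for some $D$. Turning this ``$B^{*}$ stays within the innermost blocks'' picture into a rigorous argument — carefully tracking, via Proposition \ref{prop-B} applied at each level, which subtrees of $r(T)$ occur as root-subtrees of $C$ and which occur as blow-ups $S\cdot B^{*}$ of subtrees $S$ of $r(A')$ — is the delicate point, and is exactly where the size comparison $n_{B^{*}}\le n_C$ must be invoked.
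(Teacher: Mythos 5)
Your architecture is essentially the paper's own argument, repackaged. The paper compares the two prime factorizations right to left, cancels the equal factors, and at the first disagreement $T_i\neq S_j$ disposes of the contradiction in one line: ``By Proposition~\ref{prop-B} $T_i$ must contain $S_j$ as one of its factors (or vice-versa)'', contradicting primality. That one line is precisely your compatibility lemma (of two nontrivial right factors, the smaller right-divides the larger), so you have isolated exactly the right crux. Your extra scaffolding is sound and actually tightens the paper's write-up: right factors correspond to the downward-closed groups $H_i$ of Notation~\ref{notat1} and hence have distinct sizes; the minimal one $B^{*}$ is mult-prime by associativity; and the left quotient must be shown unique --- a point the paper also needs, silently, when it cancels the common suffix to conclude $T_1\cdots T_i=S_1\cdots S_j$. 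The only real difference is packaging: induction on $n_T$ around a distinguished minimal right factor, versus contradiction at the first disagreement.

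However, as submitted your proof has a genuine hole, which you flag yourself: the compatibility lemma is never proved, and it is the entire mathematical content of the proposition --- everything else is bookkeeping. The good news is that it needs none of the delicate ``block-tracking'' induction you anticipate; it follows from Proposition~\ref{prop-B} and the definition of multiplication in a few lines. Write $T=A\cdot B=A'\cdot C$ with $A,A',B,C\neq\mathbf{1}$ and $n_B\le n_C$. By construction of $A\cdot B$, the multiset of subtrees of the root of $T$ is (subtrees of the root of $B$) $\uplus\,\{S\cdot B : S \text{ a subtree of the root of } A\}$, where each $S\cdot B$ has $n_Sn_B\ge n_B$ vertices while every root subtree of $B$ has at most $n_B-1$. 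By Proposition~\ref{prop-B}, the root subtrees of $B$ and of $C$ each form an initial segment, ordered by size, of the root subtrees of $T$; such segments are nested, and since $n_B\le n_C$ the segment for $B$ is contained in the segment for $C$ (strict containment the other way would force $n_C<n_B$, and equality forces $B=C$, in which case $D=\mathbf{1}$ works). So every root subtree of $C$ is either a root subtree of $B$ or, having at least $n_B$ vertices, is of the form $S_j\cdot B$ for some root subtrees $S_1,\dots,S_m$ of $A$. Let $D$ be the tree whose root carries exactly $S_1,\dots,S_m$. Then $D\cdot B$ has root subtree multiset (root subtrees of $B$) $\uplus\,\{S_1\cdot B,\dots,S_m\cdot B\}$, which is exactly that of $C$; hence $C=D\cdot B$. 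With this lemma supplied, your induction closes --- and, incidentally, it is also the justification missing from the paper's own one-line appeal to Proposition~\ref{prop-B}.
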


\begin{proof}
By contradiction assume that $T$ has two different factorizations $T_1\cdot T_2 \dots \cdot T_k $ and $S_1\cdot S_2 \dots \cdot S_h$ in multi-prime factors. Tracing back from $k$ and $h$, let $T_i$ and $S_j$ be the first pair of factors encountered with $T_i\neq S_j$. Then we have $T_1\cdot T_2 \dots \cdot T_i= S_1\cdot S_2 \dots \cdot S_j$.
By Proposition~\ref{prop-B} $T_i$ must contain $S_j$ as one of its factors (or vice-versa), against the hypothesis that $T_i$ is mult-prime. 
\end{proof}


Finally note that counting the number of add-prime trees is simple (Proposition \ref{num-addprime}), but an even approximate count for mult-prime trees is much more difficult. 
We pose:

\vspace{2mm}
\noindent{\bf Open problem}. {\em For a composite integer $n$ determine the number of mult-prime trees of $n$ vertices.}


\section{Negative trees and tree equations}

Once addition and multiplication are known, it is natural to define the inverse operations.

We define the {\bf subtraction} $A=T-B$ if and only if all the subtrees of {\bf \em r}$\,(B)$ are also subtrees of {\bf \em r}$\,(T)$. Then $A$ equals $T$ deprived of such subtrees. This is the inverse of the addition $T=A+B$. We have $T\,-$ {\bf 1} = $T$.

We define the {\bf division} $A=T/B$ if and only if there exists a subset $\Psi$ of the vertices of $T$ such that each $v\in \Psi$ has exactly the subtrees of {\bf \em r}$\,(B)$, and the tree $T'$ obtained as $T$ deprived of such subtrees has exactly the vertices of $\Psi$. Then $A=T'$.
This is the inverse of the multiplication $T=A\cdot B$. We have $T\,/$ {\bf 1} = $T$.

Also the operation of stretch has an inverse. We define the {\bf un-stretch} 
(symbol underline) if and only if {\bf \em r}$\,(A)$ has exactly one subtree $T$, and we pose 
$\underline{A}=T$. 
In the notation un-stretch has precedence over multiplication and stretch has precedence over un-stretch.

\vspace{1mm}
As negative numbers arose from subtraction in integer arithmetic, the more intriguing concept of negative trees arises here from tree subtractions. We propose the following definition. All the vertices of a tree $T$ are either {\bf positive} (then $T$ is positive) or {\bf negative} (then $T$ is negative), except for the root that is {\bf neutral}. Positive and negative vertices are respectively indicated with a black dot or an empty circlet. The root is also indicated with a black dot. Changing the sign of a tree amounts to changing the nature of all its vertices except for the root. Tree {\bf 1} is neutral and we have {\bf 1} $= -$ {\bf 1}.

Addition and subtraction between $A$ and $B$ keep their definition with the additional condition that if $A$ is positive and $B$ is negative all the subtrees of {\bf \em r}$\,(B)$ are also subtrees of {\bf \em r}$\,(A)$ or vice-versa, and positive and negative subtrees with identical shape cancel each other out in the result (See Figure 7). Multiplication and division between $A$ and $B$ also keep their definition with the additional condition that if $A$ and $B$ are both positive or both negative the result is positive, otherwise is negative. 

\begin{figure}
\label{fig-neg}
\begin{center}
\includegraphics[scale=0.8]{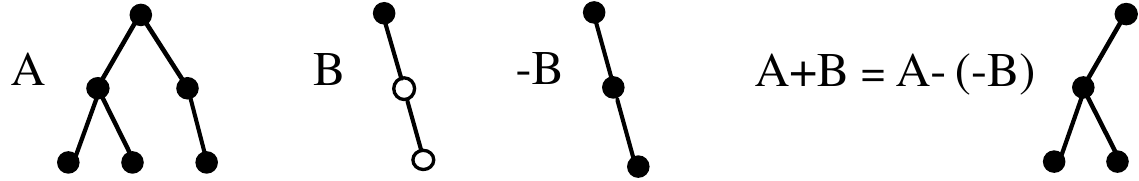}
\end{center} 

\caption {Addition between a positive tree $A$ and a negative tree $B$. }
\end{figure}

At this point we may open a window on {\bf tree equations} whose terms have all the nature of a tree, but integers may appear as multiplicative coefficients or exponents. In a sense they are companions of the Diophantine equations with integers, but the solutions are now required to be trees. We may consider equations of different degrees with different number of variables, ask questions on the existence and on the number of solutions, study the computational complexity of finding them. In fact we give only some examples, leaving the field essentially open.

Denote trees and integers with capital and lower case letters respectively. The simplest  equation is linear and has only one unknown $X$. We put:

\vspace{2mm}
\hspace{5mm}$aX+C=\,${\bf 1}, \hspace{2mm}i.e.  $\;aX=-C$ \hfill(2)

\vspace{2mm}
\noindent Equation (2) admits exactly one solution if and only if the $s_C$ subtrees of {\bf \em r}$\,(C)$ can be divided in $g\geq 1$ groups $G_1,\dots,G_g$ of identical subtrees, where each $G_i$ has cardinality $k_ia$ for $k_i\geq 1$, see example E1 in Figure 8. In this case $X$ has $s_C/a$ subtrees that can be divided in $g$ groups of $k_i$ subtrees identical to the ones of $G_i$. This solution can be easily built in time polynomial in $n_C$ starting with the transformation of $C$ in canonical form. Note that $X$ and $C$ have opposite sign.

\begin{figure}
\label{fig-eq1}
\begin{center}
\includegraphics[scale=0.8]{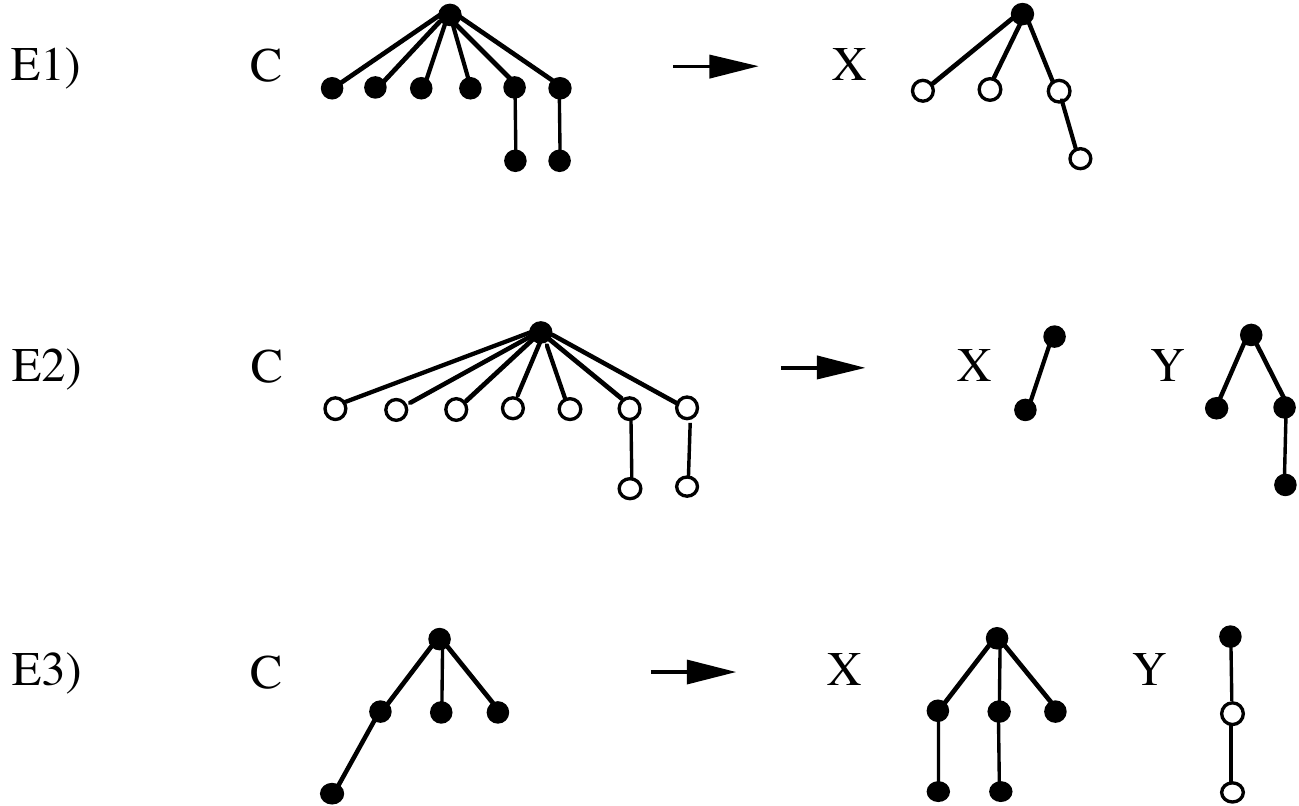}
\end{center} 

\caption {Solution of the tree equations: {\bf E1}: $2X+C\,$={\bf 1}. {\bf E2}: $3X+2Y+C=\,${\bf 1}. 
\newline 
{\bf E3}: $2X+3Y+C=\,${\bf 1}.}
\end{figure}

\vspace{1mm}
A standard linear tree equation in two unknowns $X,Y$ can be expressed as:

\vspace{2mm}
\hspace{5mm}$aX+b$\,$Y+C=\,${\bf 1}, \hspace{2mm}i.e  $\;aX+b$\,$Y=-C$ \hfill(3)

\vspace{2mm}
\noindent This equation is the companion of the diophantine equation $ax+by=c$ widely used in modular algebra, that admits an integer solution if and only if $c$ is divided by $gcd(a,b)$. So applying Proposition \ref{prop-numbers} to the trees of equation (3) we have $a\,n_X+b\,n_Y=n_C+a+b-1$ and a necessary condition for the existence of a tree solution is that $n_C+a+b-1$ divides $gcd(a,b)$, as in the examples E2, E3 of Figure 8 where $gcd(a,b)=1$. In general equation (3) admits a solution if and only if one of the two non trivial Conditions 1 and 2 below hold, corresponding respectively to trees $X,Y$ of equal sign or of opposite sign. In both cases the solution can be built in time polynomial in $n_C$. We have:

\vspace{2mm} 
\noindent {\bf Condition 1}. The $s_C$ subtrees of {\bf \em r}$\,(C)$ can be divided in $g\geq 1$ groups $G_1,\dots,G_g$ and $h\geq 1$ groups $H_1,\dots,H_h$ of identical subtrees, where each $G_i$ has cardinality $g_ia$ for $g_i\geq 1$ and each $H_i$ has cardinality $h_ib$ for $h_i\geq 1$. 
In this case $X$ has $\sum_{i=1}^gg_i$ subtrees divided in $g$ groups of $g_i$ subtrees identical to the ones of $G_i$; and $Y$ has $\sum_{i=1}^hh_i$ subtrees divided in $h$ groups of $h_i$ subtrees identical to the ones of $H_i$.
This solution can be built in time polynomial in $n_C$. Note that $X$ and $Y$ have the same sign, and $C$ has opposite sign. See Equation E2 in Figure 8.

\vspace{2mm}
\noindent  {\bf Condition 2}. Let the unknown trees $X$ and $Y$ have opposite sign. W.l.o.g. let the subtrees of {\bf \em r}$\,(X)$ be divided in $k+h$ groups $G_1,\dots, G_{k+h}$ of identical subtrees, and the subtrees of {\bf \em r}$\,(Y)$ be divided in $k$ groups $H_1,\dots, H_{k}$ of identical subtrees, with $k\geq 1$ and $h\geq 0$. 
And let the subtrees of {\bf \em r}$\,(C)$ be divided in $k+h$ groups $C_1,\dots, C_{k+h}$ of identical subtrees. 
$x_i$, $y_i$, $c_i$ respectively denote the cardinalities of $G_i,H_i,C_i$. 

To allow the addition $\;aX+b$\,$Y$ 
the subtrees in $H_i$ must be identical to the ones in $G_i$ for $1\leq i \leq k$; the subtrees in $C_i$ must be identical to the ones in $G_i$ for $1\leq i \leq k+h$; and we have the system of diophantine equations:

\vspace{1mm}
$a\,x_i-b\,y_i=c_i   \hspace{2mm}$ for $\;1\leq i\leq k$\hfill(i)

$a\,x_i=c_i   \hspace{12mm}$ for $\;k+1\leq i\leq k+h$\hfill(ii)

\vspace{1mm}
\noindent whose integer solutions (if any) state that the $a$ copies of the subtrees of $G_i$ suffice to elide the $b$ copies of the subtrees of $H_i$ in $C$, for $ i\leq k$; and $a$ copies of the subtrees in $G_i$ appear as subtrees of $C_i$, for $i> k$.  
The system can be solved under the conditions:

\vspace{1mm}
$c_i / GCD(a,b)$ integer  \hspace{2mm} for $\;1\leq i\leq k$\hfill(iii)

$c_i /a$ integer  \hspace{17mm} for $\;k+1\leq i\leq k+h$\hfill(iv)

\vspace{1mm}
\noindent for a value of $k$ established as the minimum value for which condition (iv) holds (this fixes also the value of $h$). Then if all conditions (iii) hold the system is solved in time polynomial in $n_C$ and two trees $X$, $Y$ satisfying equation (3) are immediately built from the values of $x_i$, $y_i$, out a potentially infinite number of solutions. In particular note that, for all $i$, the values $x_i$, $y_i$ must be both positive to represent subset cardinalities. If this does not happen, an alternative positive solution is built from the other by standard methods. See equation E3 in Figure 8.

\vspace{3mm}

Higher degree equations are more difficult to handle. For the quadratic tree equation:

\vspace{2mm}
\hspace{5mm}$aX^2+b$\,$Y+C=\,${\bf 1}, \hspace{2mm}i.e  $\;aX^2+b$\,$Y=-C$ \hfill(4)

\vspace{2mm}
\noindent a necessary condition for the solution is the existence of two integers $n_X, n_Y$ satisfying the algebraic equation $a\,n_X^2+b\,n_Y=n_C+a+b-1$, a well known NP-complete problem. To find a reasonably interesting approach for deciding whether equation (4) has a solution is left as an open problem.

\vspace{1mm}
A ``more ambitious'' problem can be expressed as:

\vspace{2mm}
\hspace{5mm}$X^n+Y^n=Z^n$\hfill(5)

\vspace{2mm}
\noindent with the question of deciding if equation (5) has a tree solution $X,Y,Z$ for any $n\geq 2$. In fact even for $n=2$ the problem is not simple. Due to Proposition \ref{prop-numbers} we have the necessary condition $n_X^2+n_Y^2-1=n_Z^2$ for its solution, i.e. the existence of a ``quasi-Pythagorean'' triple of integers. In fact such triples exist, as for example $\{4, 7, 8\}$, but the existence of Pythagorean trees with such numbers of vertices is left as an open problem.


\section{Possible applications and extensions}\label{applications}

While the major purpose of the present study is  the one of defining arithmetic concepts outside the realm of numbers, let us briefly discuss what the role of our proposal in applications might be.

Essentially all trees used in computer algorithms are rooted, and different families have been defined among them to deal with particular problems.  
We do not put any restriction on the tree structure. The trees considered here simply correspond to nested sets as for example hierarchical structures in computer science; or office plans in business organization; or phylogenetic trees in biology, etc. Note that the subtrees are essentially unordered at any vertex, although they must be stored in some standard form to be represented, e.g. following an alphanumeric label order of similar. Or, of course, in our canonical order.

Two main actions are generally required in a hierarchical structure. Namely: (i) add a new subtree $B$ to the root of a tree $T$; or (ii) join two independent trees $A,B$ to form a new tree $T$ with $A,B$ subtrees of the root. In our arithmetic, action (i) is represented as $T=T+\bar B$; and action (ii) is represented as $T=\bar A +\bar B$.
Both actions can be respectively undone as: $T=T- \bar B$; and $A=\underline{T-\bar B}$, $B=\underline{T-\bar A}$.

An important extension of action (i) is inserting a new subtree $A$ at a given vertex $v$ of $T$. This is obtained by an iterative operation along the path $\pi = (v_0,v_1,\dots,v_k)$, from {\bf \em r}$\,(T)=v_0$ to $v=v_k$. Letting $T_0, \dots,T_k$ be the subtrees rooted at vertices $v_0,\dots,v_k$, hence $T=T_0$, we set $S_i=T_i-\bar T_{i+1}$ for $i=0,1,\dots, k-1$; then we set $T_k=T_k+\bar A$; then we set $T_{i-1}=S_{i-1}+\bar T_i$ for $i=k,k-1,\dots,1$, where $T_0=T$ gives the transformed tree. A similar operation is required to extract a subtree $A$ at vertex $v$.
Propositions \ref{propad} and \ref{propmas} hold for the subtrees rooted at $v$, with obvious effects on the whole tree.

Other operations can be considered and their representation investigated along the lines above. In particular multiplication may be performed on subtrees only, and even be limited at leaves.  
Note that, even though multiplication could find fewer applications than addition and stretch, it may be useful in data compression because the information contained in a product $A\cdot B$ is fully present in its factors, thereby reducing the storage space needed for the product from $\Theta(n_A\cdot n_B)$ to $\Theta(n_A+n_B)$.
So the concept of primality may be of practical interest in the reverse-engineering operation of deciding if a tree has been generated as a product.


\section{Other studies on tree arithmetic}

Up to now only one major line of research, that we call LBY, has been directed to defining arithmetic on trees. Opened by J.L. Loday {\em et al} in connection with dendriform algebras \cite{L+01}, it was then developed by J.L. Loday himself who gave a full description of arithmetic operations on binary trees and their properties, showing an embedding of $I\! \! N$ in the subsets of all binary trees of $n$ vertices \cite{L02}. A. Bruno and D. Yasaki worked on Loday's theory introducing primality and counting properties on subsets of trees in \cite{BY11}. 
LBY is limited to binary trees, which carries simpler consequences than in our general case. A non-commutative  tree addition is defined in LBY, attaching the second addend to a deepest leaf of the first one, and this operation is given in two versions to express any tree by addition from one generator (as in our proposal two different operations are needed). From this construction stems a definition of tree multiplication to produce trees different from our products. Several interesting properties are derived, including some counting arguments on the different families of trees built. The most relevant extension done by Bruno and Yasaki over Loday's theory is the definition and treatment of prime trees under multiplication. 
Aside from proceeding with similar purposes, none of the definitions and results of LBY applies to our theory, or vice-versa.

Another study on tree arithmetic, due to R. Sainudiin, is aimed at using binary trees for treating mapped partitions of a special class of intervals \cite{S14}, and has nothing to share with LBY and with our theory. 
None of these works deals with aspect of computational complexity related to the operations on trees.

Along an independent line of research several papers  have been directed to define graph multiplication, from the seminal work of G. Sibidussi~\cite{S60} to the one of B. Zmazek and J. Zerownik~\cite{ZZ07}. In this context prime graphs and graph factorization have been considered under various operations of multiplication, see~\cite{BW}. Again, if applied to trees as special graphs, all the definitions and results on tree multiplication are unrelated to ours.

We finally note that a preliminary work with partial overlapping with the present paper was deposited as an earlier {\em arXiv} manuscript~\cite{L15}. Such a version did not include negative trees and tree equations.




\small

\newpage
\appendix
\section*{Appendix}
\label{app}

\vspace{5mm}
\begin{center}
\begin{tabular}
{||l|l||l|l||}
\hline

1  & 10 &                    &  \\\hline
 &  &                           18 & 110101010100 \\\hline
2  & 1100 &               19 & 110101011000  \\\hline
  &  &                           20 & 110101101000  \\\hline
3  & 110100 &           21 & 110101110000  \\\hline
4  & 111000 &           22 & 110110011000  \\\hline
  &  &                           23 & 110110101000  \\\hline
5  & 11010100 &      24 & 110110110000  \\\hline
6  & 11011000 &      25 & 110111010000 \\\hline
7  & 11101000 &      26 &  110111100000  \\\hline
8  & 11110000 &      27 &  111001101000 \\\hline
  &  &                           28 &  111001110000  \\\hline
9  & 1101010100 & 29 &  111010101000 \\\hline
10 &1101011000 & 30 &  111010110000 \\\hline
11 &1101101000 & 31 &  111011010000 \\\hline
12 &1101110000 & 32 &  111011100000 \\\hline
13 &1110011000 & 33 &  111100110000 \\\hline
14 &1110101000 & 34 &  111101010000 \\\hline
15 &1110110000 & 35 &  111101100000 \\\hline
16 &1111010000 & 36 &  111110100000 \\\hline
17 & 1111100000 &  37 &  111111000000  \\\hline
\end{tabular} 
\end{center}

\vspace{1mm}
\large
\begin{center}
The binary sequences representing the trees of the first six canonical families.
\end{center}

\end{document}